\tikzstyle{svertex}=[circle,inner sep=0.cm, minimum size=1.3mm, fill=black, draw=black]
\tikzstyle{bagvertex}=[circle, inner sep=0.cm, minimum size=6mm, fill=none, draw=black, thick]
\tikzstyle{vertex}=[circle, inner sep=0.cm, minimum size=2mm, fill=none, draw=black]
\tikzstyle{novertex}=[circle, inner sep=0.cm, minimum size=0mm, fill=none, draw=none]
\newtheorem{definition}{Definition}[section]
\newtheorem{theorem}[definition]{Theorem}
\newtheorem{lemma}[definition]{Lemma}
\newcommand{\NP}{\mathcal{NP}}
\newcommand{\N}{\mathbb{N}}
\newcommand{\R}{\mathbb{R}}
\renewcommand{\O}{\mathcal{O}}
\newcommand{\rewardsum}{\sum_{i: A_i\subseteq S} a_i}
\newcommand{\penaltysum}{\sum_{j: B_j\cap S \neq \emptyset} b_j}
\newcommand{\Penalties}{\mathcal{B}}
\newcommand{\Rewards}{\mathcal{A}}
\newcommand{\CiEntry}{C_i(S, n_1,\dots, n_h)}
\newcommand{\CjEntry}{C_j(S, n_1,\dots, n_h)}
	\title{The Reward-Penalty-Selection Problem}
\author{T. Heller\footnote{\texttt{till.heller@itwm.fraunhofer.de}, corresponding author}\and K.-H. Küfer \and S.O. Krumke}
\begin{document}
	\maketitle
	\begin{abstract}
		The \emph{Set Cover Problem (SCP)} and the \emph{Hitting Set Problem (HSP)} are well-studied optimization problems. In this paper we introduce the \emph{Reward-Penalty-Selection Problem} which can be understood as a combination of the SCP and the HSP where the objectives of both problems are contrary to each other. Applications of the RPSP can be found in the context of combinatorial exchanges in order to solve the corresponding winner determination problem. We give complexity results for the minimization and the maximization problem as well as for several variants with additional restrictions. Further, we provide an algorithm that runs in polynomial time for the special case of laminar sets and a dynamic programming approach for the case where the instance can be represented by a tree or a graph with bounded tree-width. We further present a graph theoretical generalization of this problem and results regarding its complexity.  
	\end{abstract}

\section{Introduction}
A combinatorial exchange is often used for an efficient distribution of goods among sellers and buyers, where each of them can submit their preferences by complex bids to the exchange. Application for combinatorial exchanges can be found in various settings, ranging from distributing slots for airplanes (cf. \cite{rassenti1982combinatorial, gruyer2003auctioning}) to distributing transport freight tours (cf. \cite{ewe2011, ackermann2014modeling}).

The essential part of such a combinatorial exchange is to solve the winner determination problem (WDP). Often the WDP can be formulated as a \emph{Set Cover Problem} or a \emph{Set Partition Problem}. In many auction settings suppliers of goods are also satisfied if they are paid in full but do not have to give away all the goods offered, i.e. some of the goods fall back to them. We call this property of an exchange \emph{free fall back}. For a complete real-world model that takes this property into account see for example the combinatorial freight auction of Ewe (cf. \cite{ewe2011}). On their exchange, freight carriers bid on tours they want to include in their portfolio and offer tours that do not fit into their existing vehicle plan. The author focused mainly on the profit distribution and the bidding support while solution strategies for the WDP were not discussed. 

A WDP formulation that takes into account the property of free fall back can be formulated as a combination of Set Cover and Hitting Set Problem. But first, let us recall both problems. The \emph{Set Cover Problem (SCP)} is one of the classical problems of combinatorial optimization, e.g. Karp proved the $\NP$-completeness in his seminal paper~\cite{karp1975computational}. In the SCP one is given a set of elements~$N\coloneqq \{1,\dots, n\}$ and a set of sets~$S\coloneqq \{S_i | S_i\subseteq N\}$. The task is to find a minimal sized subset of $S$ such that every element is contained in at least one of the sets. Closely related to this is the \emph{Hitting Set Problem (HSP)}, where one is given a collection~$C$ of subsets of a finite set~$S$. The task here is to find a minimal subset~$S'\subseteq S$ such that $S'$ contains at least one element from each subset in $C$. Also the HSP is $\NP$-complete, (cf. \cite{garey1979computers}). Both problems can be generalized to their weighted version, where each set or element is associated with a positive weight.

The RPSP can be seen as a combination of both, the HSP and the SCP with contrary objective functions. In the RSPS one is given a ground set of elements~$N$, a set of reward sets~$A\coloneqq \{A_i | A_i\subseteq N\}$ with corresponding rewards~$a_i\in\R$ and a set of penalty sets~$B\coloneqq \{B_i | B_i\subseteq N\}$ with corresponding penalties~$b_i\in\R$. We say a set is \emph{covered} if all elements of said set are chosen and a set is \emph{hit} if at least one element is chosen. Now the task is to find a subset of elements~$S$ such that the profit function
\begin{align}
\rewardsum - \penaltysum \label{eq: rpsp objective function}
\end{align}
is maximized (minimized). In other words, we try to find a subset of elements such that as many reward sets as possible are covered and at the same time as few penalty sets as possible are hit. Given instances of the SCP and HSP, an instance of the RPSP can be constructed by taking the sets from SCP as reward sets, the elements from the HSP as penalty sets and the elements from the SCP and the sets from the HSP as players. 

Quite a lot optimization problems which are $\NP$-complete become polynomial tractable when we restrict ourselves to instances that can be modeled as a tree, e.g. the \emph{Vertex-Cover Problem} or the \emph{Dominating Set Problem}. In many cases a dynamic programming approach can be used to obtain an exact algorithm that runs in polynomial time. Unfortunately, most instances of real-world optimization problems have no inherent structure that can be modeled by a tree. The concept of \emph{tree-width} measures how tree-like a given graph is. For this, a graph is decomposed into not necessarily disjoint sets, i.e. overlapping sets, such that the interaction map between these sets form a tree-like structure. We will use the property of bounded tree-widthness in order to solve the RPSP on graphs with a tree-like structure.

%	\subsection{Our Contribution}

In the following we discuss the maximization of the rewards and the maximization of the penalty (think of reversed roles for penalty and reward) as objective functions. There are other objective functions that might be interesting from a practical point of view, such as the maximization of the number of chosen players under a given budget restriction, but this is not addressed here. 

% In Theorem~\ref{thm: max-RPSP complexity} we show that the general setting where one tries to maximize the function~\eqref{eq: rpsp objective function} can be solved in $\O(\log R\cdot T_{MF})$, where $T_{MF}$ denotes the running time of a maximum flow algorithm. 

% Conversely, the minimization problem turns out to be $\NP$-complete, as shown in Theorem~\ref{thm: min-RPSP complexity}. Note that this problem can be seen as a maximization problem as well if one considers reversed roles for the reward and penalty sets. In Theorem~\ref{thm: laminar rpsp} we show that an instance with laminar sets can be solved in $\O(n^3)$ time. If we restrict ourselves to instances where reward sets are given as singletons and the connection graph has a bounded tree-width~$k$, we prove in Theorem~\ref{thm: bounded tree width rpsp} that one can solve this problem by a dynamic programming approach in $\O(n(2^{\O(k^3)}+2^{k+1}n^{k+1}))$ time, which is polynomial.

The rest of the paper is structured as follows. In Section~\ref{sec: prob def} we give a formal problem definition and prove complexity results for the general maximization and minimization problem. In Section~\ref{sec: prob var} several problem variants and solution strategies are presented. A generalization of the RPSP in terms of a graph theoretical problem is presented in Section~\ref{sec: rpsp: graph}. We conclude with a short outlook.

\section{Problem Definition and Complexity Results}\label{sec: prob def}
In this section we give a formal definition of the \emph{Reward Penalty Selection Problem (RPSP)} \index{Reward-Penalty-Selection Problem} and state two complexity results for the general cases. 
\begin{definition}[RPSP]
	Let $N\coloneqq\{1,\dots,n\}$ denote the set of players, $\Rewards\coloneqq \{A_1,\dots, A_h\}$ the set of reward sets~$A_i\subseteq N$ with associated reward~$a_i\in\R_+$,  and $\Penalties\coloneqq \{B_1,\dots,B_l\}$ the set of penalty sets~$B_j\subseteq N$ with associated penalty~$b_j\in\R_+$. The \emph{max-RPSP} has the objective function
	\begin{align*}
	\max_{X\subseteq N} \sum_{i: A_i\subseteq X} a_i - \sum_{j: B_j\cap X \neq \emptyset} b_j,
	\end{align*}
	whereas the \emph{min-RPSP} has the objective function
	\begin{align*}
	\min_{X\subseteq N} \sum_{i: A_i\subseteq X} a_i - \sum_{j: B_j\cap X \neq \emptyset} b_j,
	\end{align*}
\end{definition}

\begin{theorem}[max-RPSP]\label{thm: max-RPSP complexity}
	The decision problem of the max-RPSP is polynomially solvable by a minimum cut computation.
\end{theorem}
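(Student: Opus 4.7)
The plan is to model the max-RPSP as a project-selection / closure problem and reduce it to a minimum $s$-$t$-cut computation in an auxiliary network, exploiting the standard trick where infinite-capacity edges encode logical implications between chosen items.

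First I would construct the following directed network $G=(V,E)$ with capacities $c\colon E\to\R_+\cup\{\infty\}$. The vertex set consists of a source $s$, a sink $t$, one node for each reward set $A_i\in\Rewards$, one node for each penalty set $B_j\in\Penalties$, and one node for each player $k\in N$. The arcs are: $s\to A_i$ with capacity $a_i$ for every reward set; $B_j\to t$ with capacity $b_j$ for every penalty set; $A_i\to k$ with capacity $\infty$ for every $k\in A_i$; and $k\to B_j$ with capacity $\infty$ for every $k\in B_j$. This graph has $1+1+h+l+n$ vertices and at most $2+h+l+\sum_i|A_i|+\sum_j|B_j|$ arcs, so it has polynomial size in the input.

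Next I would establish the correspondence between $s$-$t$-cuts of finite capacity and feasible subsets $X\subseteq N$. Given any $s$-$t$-cut $(S_s,S_t)$ with finite capacity, no infinite-capacity arc may cross the cut, which forces: if $A_i\in S_s$ then $A_i\subseteq X:=N\cap S_s$, and if $k\in S_s$ then every penalty set $B_j$ containing $k$ lies in $S_s$. In an optimal cut we may therefore assume $A_i\in S_s$ exactly when $A_i\subseteq X$ (otherwise the arc $s\to A_i$ is cut unnecessarily) and $B_j\in S_t$ exactly when $B_j\cap X=\emptyset$ (otherwise we are free to move $B_j$ to $S_t$). The total capacity of such a cut is
\begin{align*}
\sum_{i:\,A_i\not\subseteq X}a_i\;+\;\sum_{j:\,B_j\cap X\neq\emptyset}b_j
\;=\;\sum_{i=1}^{h}a_i\;-\;\Bigl(\sum_{i:\,A_i\subseteq X}a_i-\sum_{j:\,B_j\cap X\neq\emptyset}b_j\Bigr).
\end{align*}
Hence minimizing the cut capacity is equivalent to maximizing the max-RPSP objective, and from any minimum cut we can read off an optimal $X$ by taking the players on the source side. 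The decision version (``is there $X$ with objective value at least $K$?'') is then answered by comparing $\sum_i a_i-K$ with the minimum-cut value, which can be computed in polynomial time by any standard max-flow/min-cut algorithm.

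The main step requiring care is the argument that, for a minimum cut, the placement of the $A_i$- and $B_j$-nodes is completely determined by the placement of the player nodes; once this is pinned down the cut capacity rewrites cleanly into the RPSP objective plus a constant, and the polynomial-time solvability follows immediately from the polynomial-time solvability of min-cut. I do not expect any genuine obstacles beyond verifying this correspondence carefully and noting that the construction is of polynomial size in $n$, $h$, and $l$.
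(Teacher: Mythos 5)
Your proof is correct, and it does reduce the problem to a minimum $s$-$t$-cut as the paper does, but the network you build is genuinely different from the one in the paper. You use the standard three-layer project-selection (maximum-closure) network with an explicit node for every player: $s\to A_i$ with capacity $a_i$, infinite implication arcs $A_i\to k$ and $k\to B_j$, and $B_j\to t$ with capacity $b_j$; the threshold is handled afterwards by comparing $\sum_i a_i - K$ with the min-cut value. The paper instead builds a two-layer bipartite network on the set nodes only, with the orientation reversed ($s\to B_j$ with capacity $b_j$, an infinite arc $B_j\to A_i$ whenever $A_i\cap B_j\neq\emptyset$, and $A_i\to t$ with capacity $a_i$), plus an artificial node $z$ with arcs $s\to z$ of capacity $\alpha$ and $z\to t$ of capacity $\infty$ that bakes the decision threshold into the graph itself. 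The trade-off: your construction makes the correspondence between finite cuts and arbitrary subsets $X\subseteq N$ completely explicit, so no structural claim about optimal solutions is needed; the paper's construction is smaller (no player nodes) but implicitly relies on the observation that an optimal $X$ may be taken to be a union of fully selected reward sets, with the hit penalty sets then being exactly the neighbours $N(W)$ of the selected reward sets in the intersection graph. Your argument that in a minimum cut the placement of the $A_i$- and $B_j$-nodes is forced by the placement of the player nodes is exactly the point that needs care, and you handle it correctly; both routes yield polynomial-time solvability via any max-flow algorithm.
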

\begin{proof}
	The decision problem asks if for a given number~$\alpha$ there exists a subset~$X$ such that
	\begin{align*}
	\sum_{i: A_i\subseteq X} a_i - \sum_{j: B_j\cap X \neq \emptyset} b_j \geq \alpha.
	\end{align*}
	We define the \emph{reward-penalty-selection graph} for the decision problem as a bipartite graph~$G=(A\cup B, E)$. For each reward set in $\Rewards$ we add a node~$A_i$ to $A$ and for each penalty set in $\Penalties$ we add a node~$B_j$ to $B$. Further, we add a source node~$s$, a sink node~$t$ and an artificial node~$z$. Furthermore, let the set of reward set indices be given by $I\coloneqq \{i: A_i \in \Rewards\}$. 
	
	The source node~$s$ is adjacent to all penalty nodes~$B_j$ and the sink node~$t$ is adjacent to all reward nodes~$A_i$. The artificial node~$z$ is both connected to $s$ and $t$. A reward node~$A_i$ is adjacent to a penalty node~$B_j$ if and only if their intersection is nonempty. The capacities are defined as follows:
	\begin{align*}
	c(e) = \begin{cases} b_j & \text{ if } e=(s,B_j) \\
	a_i & \text{ if } e=(A_i, t) \\
	\alpha & \text{ if } e=(s,z) \\
	\infty & \text{ otherwise. }
	\end{cases} 
	\end{align*}
	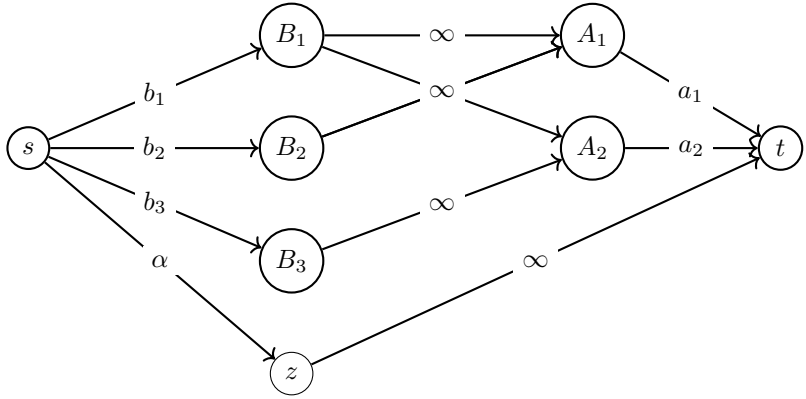
\begin{figure}[htbp!]
		\centering
		\begin{tikzpicture}
		\begin{scope}[every node/.style={circle,thick,draw}]
		\node (s) at (0,3) {$s$};
		\node (B1) at (3.5,4.5) {$B_1$};
		\node (B2) at (3.5,3) {$B_2$};
		\node (B3) at (3.5,1.5) {$B_3$};
		
		\node (A1) at (7.5,4.5) {$A_1$};
		\node (A2) at (7.5,3) {$A_2$};
		
		\node (t) at (10, 3) {$t$};
		\end{scope}
		
		\begin{scope}
		[every node/.style={circle, draw}]
		\node (z) at (3.5, 0) {$z$};
		\end{scope}
		
		\begin{scope}[
		every node/.style={fill=white,rectangle},
		every edge/.style={draw=black,thick}]
		
		\path [->] (s) edge node {$b_1$} (B1);
		\path [->] (s) edge node{$b_2$} (B2);
		\path [->] (s) edge node{$b_3$} (B3);
		\path [->] (s) edge node{$\alpha$} (z);
		
		\path [->] (B1) edge node {$\infty$} (A1);
		\path [->] (B2) edge node {$\infty$} (A1);
		\path [->] (B3) edge node {$\infty$} (A2);
		\path [->] (B1) edge node {$\infty$} (A2);
		\path [->] (B2) edge node {$\infty$} (A1);
		
		\path [->] (A1) edge node{$a_1$} (t);
		\path [->] (A2) edge node{$a_2$} (t); 
		\path [->] (z) edge node {$\infty$} (t);
		
		\end{scope}
		\end{tikzpicture}
		\caption{Example of the bipartite graph constructed in proof of Theorem~\ref{thm: max-RPSP complexity}.}
	\end{figure}
	We now show that the following two statements are equivalent:
	\begin{itemize}
		\item There exists a minimum s-t-cut~$(S,T)$ in $G$ such that for the capacity~$c(\delta^+(S)) \leq \sum_{i\in I} a_i - \alpha$ holds true.  
		\medskip
		\item There exists a set~$X$ such that $\rewardsum - \penaltysum \geq \alpha$.
	\end{itemize}
	
	\medskip
	Let $(S,T)$ denote a minimum cut with $c(\delta^+(S))\leq \sum_{i\in I} a_i - \alpha$. Then the cut capacity is given by
	\begin{align*}
	& \sum_{(s,j)\in\delta^+(S)} b_j + \sum_{(i,t)\in\delta^+(S)} a_i \leq \sum_{i\in I} a_i - \alpha \\
	\Leftrightarrow & \sum_{(s,j)\in\delta^+(S)} b_j \leq \sum_{i\in W} a_i - \alpha,
	\end{align*}
	where $W$ is defined as $\{i\in I\} \backslash \{i: (i,t)\in\delta^+(S)\}$. It holds that $N(W)\subseteq \{j: (s,j)\in\delta^+(S)\}$ since otherwise there would exist a node in $T$ and a node in $S$ that are connected which implies that the capacity of the minimum $s-t$ cut is not finite - a contradiction. Thus, we get
	\begin{align*}
	\sum_{i\in W} a_i & \geq \sum_{(s,j)\in\delta^+(S)} b_j + \alpha \\
	& = \sum_{j \in N(W)} b_j + \alpha
	\end{align*} 
	Now, $X = W \, \cap \, \{A_i : \, A_i\in\Rewards\}$ fulfills the desired condition. The converse direction follows along the same lines. Thus, the RPSP can be solved by a minimum cut computation.
\end{proof}

%In the following theorem, we will show that 
%\begin{align}
%\min + \begin{cases}
%\text{ reward when set fully chosen } \\
%\text{ penalty when at least one member is chosen } 
%\end{cases}
%\end{align}
%is $\NP$-complete. This problem is equivalent to maximizing the penalty in the above setting. By changing names (so that we maximize the reward), we get
%\begin{align}
%\max + \begin{cases}
%\text{ reward when at least one member is chosen } \\
%\text{ penalty when set fully chosen } 
%\end{cases}
%\end{align}
%is $\NP$-complete. This problem is called \emph{min-RPSP} or simply \emph{RPSP}. 
The theorem above shows that the max-RPSP can be solved in polynomial time. On the other hand, it turns out that the min-RPSP is hard to solve in general, as we see in the following theorem. From now on we will abbreviate the min-RPSP by RPSP. Note that 
\begin{align}
\min_{X\subseteq N} \sum_{i: A_i\subseteq X} a_i - \sum_{j: B_j\cap X \neq \emptyset} b_j,
\end{align}
is equivalent to
\begin{align}
\max_{X\subseteq N} \sum_{j: B_j\cap X \neq \emptyset} b_j - \sum_{i: A_i\subseteq X} a_i. \label{func: rpsp objective function}
\end{align}
Thus, we can think of the RPSP as a maximization problem. For the rest of the paper we consider the RPSP as a maximization problem with objective function~\eqref{func: rpsp objective function} and reverse the roles of reward and penalty sets, i.e. we obtain a reward from a reward set if it is hit and get a penalty from a penalty set if it is covered. 

\begin{theorem}[min-RPSP]\label{thm: min-RPSP complexity}
	The min-RPSP is $\NP$-complete, even if the penalty sets have size~$2$, the reward sets are singletons and we assume uniform rewards and penalties of $1$.
\end{theorem}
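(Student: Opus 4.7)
The plan is to reduce from the decision version of the Vertex Cover problem, which is $\NP$-complete even on simple graphs. Given an instance consisting of a graph $G=(V,E)$ and an integer $k$, I would construct an RPSP instance as follows: the ground set of players is $N \coloneqq V$; for each vertex $v \in V$ we create a reward singleton $A_v \coloneqq \{v\}$ with $a_v = 1$; for each edge $e = \{u,w\}\in E$ we create a penalty pair $B_e \coloneqq \{u,w\}$ with $b_e = 1$. This construction obeys the restrictions stated in the theorem and is clearly computable in polynomial time.

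Under this encoding, for any $X \subseteq V$ we have $\sum_{i\colon A_i \subseteq X} a_i = |X|$ and $\sum_{j\colon B_j \cap X \neq \emptyset} b_j = |E(X)|$, where $E(X) \coloneqq \{e \in E : e \cap X \neq \emptyset\}$ denotes the set of edges incident to $X$. The min-RPSP objective therefore reads $\min_{X \subseteq V} \bigl(|X| - |E(X)|\bigr)$. Using the identity $|E(X)| = |E| - |E(V\setminus X, V\setminus X)|$, where the latter counts edges with both endpoints outside $X$, the objective becomes $-|E| + \min_{X}\bigl(|X| + |E(V\setminus X, V\setminus X)|\bigr)$.

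The heart of the argument is then the identity $\min_{X \subseteq V}\bigl(|X| + |E(V\setminus X, V\setminus X)|\bigr) = \tau(G)$, where $\tau(G)$ is the vertex cover number of $G$. The inequality $\leq$ is immediate because any vertex cover $C$ realises $|C| + 0$. For the reverse inequality, I would argue by an exchange step: if $X$ is not a vertex cover, some edge $\{u,w\}$ lies entirely in $V\setminus X$, and replacing $X$ by $X \cup \{u\}$ increases $|X|$ by one but decreases $|E(V\setminus X, V\setminus X)|$ by at least one. Iterating yields a vertex cover of value no larger than the original expression.

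Combining the two observations, $G$ admits a vertex cover of size at most $k$ if and only if the constructed RPSP instance admits a subset $X$ with objective value at most $k - |E|$. Membership in $\NP$ is immediate, as the objective value of a candidate $X$ is computable in time polynomial in the size of the input by direct evaluation of the two sums. The only non-routine ingredient is the short exchange argument that pushes an optimal solution onto a vertex cover; everything else amounts to unfolding definitions, so I do not anticipate a real obstacle.
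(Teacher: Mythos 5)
Your proof is correct, and up to complementation it is the same argument as the paper's. The paper reduces from Maximum Independent Set, working with the reversed-role convention it introduces just before the theorem (maximize the number of hit singletons minus the number of fully covered pairs) and shows the optimum equals $\alpha(G)$ via a local exchange that \emph{removes} an endpoint of any fully chosen edge. You reduce from Vertex Cover, working with the literal objective of Definition~2.1 (minimize the number of covered singletons minus the number of hit pairs) and show the optimum equals $\tau(G)-|E|$ via the mirror-image exchange that \emph{adds} an endpoint of any untouched edge. The instance constructions are identical --- nodes become unit reward singletons, edges become unit penalty pairs --- and since $\tau(G)=|V|-\alpha(G)$ and your rewriting $|E(X)|=|E|-|E(V\setminus X,V\setminus X)|$ is exactly the substitution $X\mapsto V\setminus X$, the two proofs carry the same content. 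What your version buys is that it avoids the paper's role-reversal convention entirely (a point on which the paper's own wording is not fully consistent: later it describes this very result as concerning ``singleton penalty sets and reward sets of size exactly two''), and it makes the accounting of the hit-versus-covered bookkeeping explicit; what the paper's version buys is a slightly more direct link to the independent-set structure that it reuses later for the bounded-tree-width and chordal-graph variants. No gap either way.
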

\begin{proof}
	We show the claim by a reduction from the \emph{Maximum Independent Set Problem (MIS)}. An instance of the MIS is given by an undirected graph~$G=(V,E)$ and a number~$k\in\N$.  The question posed is if there is an independent set~$S\subseteq V$, i.e. a set~$S$ such that no two nodes in~$S$ are adjacent, which has size at least~$k$. The MIS is well-known to be $\NP$-complete (cf.\cite{garey1979computers}).
	
	Given an instance of the MIS, we define the penalty and reward sets in the following way. First, we identify the set of players with the set of all nodes. For every edge~$e\in E$ we add a penalty set~$B_e$ with penalty~$1$ consisting of the two players that correspond to the incident nodes. For every node~$v\in V$ we add a reward set with reward~$1$ consisting of the player corresponding to the node. Thus, choosing both end nodes of an edge always gives a solution of strictly positive value. 
	
	Moreover, an independent set~$S\subseteq V$ in the graph~$G$ yields a solution to the RPSP of value~$|S|\geq 0$.  Thus, in particular, if there is an independent set in~$G$ of size~$k$, there is a solution to the RPSP of value~$k$.
	
	Conversely, a selection of players with profit~$k$ might also choose some penalty sets completely. In this case, by removing one chosen player that is contained in a completely chosen penalty set from the selection, the profit is decreased by one, but also increased by at least one since at least one penalty set is less chosen completely. By iterating this procedure we get a solution with profit at least~$k$ such that no penalty set is chosen completely. Now, by taking the nodes corresponding to the chosen players, we obtain an independent set of size at least~$k$.
\end{proof}

Suppose we have uniform rewards~$a$ and uniform penalties~$b$. Then the proof above shows whenever we have 
\begin{align*}
\frac{b}{a} \geq 1
\end{align*}
the min-RPSP is $\NP$-hard. In the proofs of the complexity results above we have seen how to construct a graph from an instance of the RPSP. We formalize this as follows. Each instance~$I$ of the RPSP can be represented by a \emph{connection graph} CG(I), which is a bipartite graph with node sets~$V_1\coloneqq \Rewards\cup\Penalties$, $V_2\coloneqq N$, and edges between a set node and a player node if and only if the player is contained in the considered set. The corresponding reward and penalty are associated with the respective node.

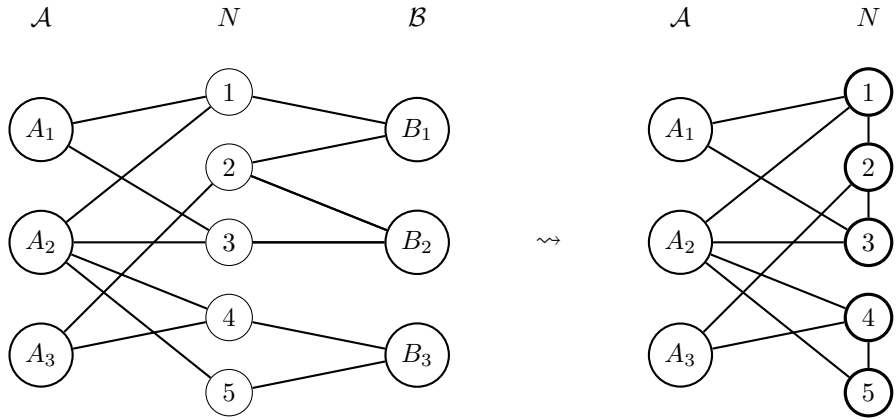
\begin{figure}[htbp!]
	\centering
	\begin{tikzpicture}[every node/.style={fill=white,rectangle}, every edge/.style={draw=black,thick}]
	\begin{scope}
	\begin{scope}[every node/.style={circle,thick,draw}]
	\draw (2.5, 6) node (A) [novertex] {$\Rewards$};
	\node (A1) at (2.5,4.5) {$A_1$};
	\node (A2) at (2.5,3) {$A_2$};
	\node (A3) at (2.5,1.5) {$A_3$};
	
	\draw (7.5,6) node (B) [novertex] {$\Penalties$}
	(7.5, 1.5) node (B3) {$B_3$};
	\node (B1) at (7.5,4.5) {$B_1$};
	\node (B2) at (7.5,3) {$B_2$};
	\end{scope}
	
	\begin{scope}
	[every node/.style={circle, draw}]
	\draw (5,6) node (N) [novertex] {$N$};
	\node (n1) at (5,5) {$1$};
	\node (n2) at (5,4) {$2$};
	\node (n3) at (5,3) {$3$};
	\node (n4) at (5,2) {$4$};
	\node (n5) at (5,1) {$5$};
	\end{scope}
	
	\begin{scope}
	
	\draw 
	(A1) edge (n1)
	(A2) edge (n1)
	(A2) edge (n3)
	(A3) edge (n2)
	(A1) edge (n3)
	(A2) edge (n5)
	(A2) edge (n4)
	(A3) edge (n4)
	(n2) edge (B1)
	(n2) edge (B2)
	(n3) edge (B2)
	(n1) edge (B1)
	(n2) edge (B2)
	(n3) edge (B2)
	(n4) edge (B3)
	(n5) edge (B3); 
	\end{scope}
	\end{scope}
	
	\begin{scope}[shift={(9.25,0)}]
	\draw (0,3) node (T) [novertex] {$\rightsquigarrow$};
	\end{scope}
	\begin{scope}[shift={(8.5,0)}]
	\begin{scope}[every node/.style={circle,thick,draw}]
	\draw (2.5, 6) node (A) [novertex] {$\Rewards$};
	\node (A1) at (2.5,4.5) {$A_1$};
	\node (A2) at (2.5,3) {$A_2$};
	\node (A3) at (2.5,1.5) {$A_3$};
	\end{scope}
	
	\begin{scope}
	[every node/.style={circle, draw, very thick}]
	\draw (5,6) node (N) [novertex] {$N$};
	\node (n1) at (5,5) {$1$};
	\node (n2) at (5,4) {$2$};
	\node (n3) at (5,3) {$3$};
	\node (n4) at (5,2) {$4$};
	\node (n5) at (5,1) {$5$};
	\end{scope}
	
	\begin{scope}
	
	\draw 
	(A1) edge (n1)
	(A2) edge (n1)
	(A2) edge (n3)
	(A3) edge (n2)
	(A1) edge (n3)
	(A2) edge (n5)
	(A2) edge (n4)
	(A3) edge (n4)
	(n1) edge (n2)
	(n3) edge (n2)
	(n4) edge (n5); 
	\end{scope}
	\end{scope}
	\end{tikzpicture}
	\caption{Example of a connection graph and the corresponding simplified connection graph.}	
\end{figure}

Note that in special cases the connection graph can be slightly adapted. Consider the case where all reward sets are given as singletons, i.e. it contains only one player and therefore is only adjacent to one player node. In this case we can identify the player nodes by the reward set nodes in order to reduce the order of the graph. This gives a bipartite graph with node set $V\coloneqq \Rewards\cup\Penalties$ which is called the \emph{reduced connection graph}. Since the reduction identifies player nodes with reward set nodes that have only one neighbour by assumption, the tree-width of the connection graph is equal to the tree-width of the reduced connection graph. 

Similar, if all penalty set nodes are of size exactly two, these degree-2-nodes can be replaced by an edge between the incident player nodes and an edge weight corresponding to the penalty. Note that this \emph{simplified graph} is used in the proof of the $\NP$-completeness of the RPSP. Again the tree-width of the connection graph is equal to the tree-width of the reduced connection graph since subdividing edges, i.e. replacing an edge by a degree-2-node that is adjacent to the incident node of the edge, does not change the size of the bags in a tree decomposition. 

Since the tree-width is preserved and together with the reduction in the proof of Theorem~\ref{thm: min-RPSP complexity} it yields that if the instance of the RPSP has singleton reward sets, penalty sets of size exactly two with an sufficiently high penalty, and the simplified connection graph has bounded tree-width, the dynamic program for the MIS problem can be used to solve it.

%	\subsection{Solving the RPSP exact}
Furthermore, suppose we have uniform rewards~$a$, uniform penalties~$b$, and let $\Delta$ denote the maximal degree of the given instance graph. If
\begin{align*}
\frac{b}{a} \leq \frac{1}{\Delta}
\end{align*}
holds, the min-RPSP is easy to solve: choosing all reward singletons is the optimal solution.

The RPSP can be formulated as an integer program. %All the following in the section about the graph generalization
\begin{align}
\textbf{(RPSP)} \qquad \quad \max \quad &\sum_{A_i\in\Rewards}a_iy_i - \sum_{B_j\in\Penalties} b_jz_j \label{ip: min-RPSP}\\
\text{s.t.}\quad & \sum_{u\in B_j} (x_u - 1) + 1 \leq z_j \qquad \qquad  \text{for all } B_j\in\Penalties \\
& y_i \leq \sum_{u\in A_i} x_u \qquad\qquad \qquad\qquad \text{for all } A_i\in\Rewards \\
& x_u,y_i,z_j \in \{0,1\}  \label{var: decision min-RPSP}
\end{align}

Clearly, an upper bound on the running time of an exact algorithm of the RPSP is given by a brute force approach which tries out all the possible selections. This runs in $\O(2^n\cdot n^2)$ since $2^n$ possible selections have to be computed and each evaluation of a given selection costs time $\O(n^2)$. In the next section more problem variants and solution strategies are discussed. 

%In order to obtain a fast exact algorithm, we want to discuss global upper and lower bounds for the optimal solution. (For good branching rules?) For this,  

%	\subsection{The min-RPSP and its LP-Relaxation}
%	We have seen above that the min-RPSP can be formulated as a maximization problem. For the rest of the chapter, we refer to the min-RPSP by RPSP if not stated otherwise. 
In the relaxation of the RPSP, the constraints~\eqref{var: decision min-RPSP} are replaced by
\begin{align}
x_u,y_i,z_j \in [0,1]  \label{var: decision relax-min-RPSP}
\end{align}

Given a solution of the relaxation, in order to obtain an integer solution we apply arithmetic rounding to every variable~$x_i$, i.e. rounding up if the decimal part is greater or equal to $0.5$ and rounding down otherwise. The rewards and penalties are then counted afterwards.

For measuring the distance of the rounded solution~$x^{\text{round}}$ to the optimal integer solution~$x^*$, we take the element-wise distance given by
\begin{align*}
\Delta_{\text{round}} \coloneqq \sum |x^*_i - x^{\text{round}}_i|.
\end{align*}

An instance configuration is a tuple~$(n,r,p,\beta)$, where $n$ denotes the number of elements, $r$ denotes the number of reward sets, $p$ denotes the number of penalty sets and $\beta$ defines the bound~$\beta n$ on the number of elements in a reward (penalty) set. For each instance configuration, we computed 1000 random instances. In Table~\ref{tab: results lp relaxation}, $\overline{\Delta}$ denotes the average distance between the rounded solution player variable values and the player variable values of the optimal integer solution. The next column shows the maximum distance between these two solutions of all of the random instances. The factor~$\overline{\alpha}$ denotes the average approximation factor where the last column denotes the worst approximation ratio over all computed instances. 

\begin{table}[htbp!]
	\centering
	\footnotesize
	\begin{tabular}{cccc|cccc} 
		n & r& p & $\beta$ & $\overline{\Delta}$ & $\Delta_{\max}$& $\overline{\alpha}$ & $\alpha_{\min}$ \\
		\hline 
		\phantom{\small{.}}&&&&&&& \\
		100 & 100 & 100 & 0.25 & 13.743 & 31 & 0.958 & 0.574 \\
		100 & 100 & 100 & 0.5 &  7.01 & 24 & 0.974 & 0.451 \\
		100 & 100 & 100 & 0.75 & 1.646 & 19 & 0.995 & 0.763 \\
		100 & 100 & 100 & 1 & 0.725 & 13 & 0.9997 & 0.913  \\
		\phantom{\small{.}}&&&&&&& \\
		100 & 150 & 50 & 1 & 0.325 & 7 & 0.9997 & 0.928 \\
		100 & 50 & 150 & 1 & 1.278 & 15 & 0.997 & 0.658 \\
	\end{tabular}
	\caption{Average distance of the optimal solution to the rounded solution.}
	\label{tab: results lp relaxation}
\end{table}

As the results show, the average approximation is fairly good for all the tested instance configurations. Despite some instances where the rounding approach yields in a bad approximation (see last column of Table~\ref{tab: results lp relaxation}), the results show that the solution obtained by the rounding procedure can be a good initial solution. In the following subsection we are interested in exact solution strategies for different problem variants. 

\section{Problem Variants}\label{sec: prob var}
In this section we consider different variants of the RPSP. Note that changing the roles of reward and penalty sets is equivalent to change the optimization direction, i.e. instead of minimizing the penalty where we get a reward by fulfilling a reward set completely and obtaining a penalty if at least one member of the set is chosen, one can also maximize the reward where a reward is obtained by choosing at least one member of a reward set and a penalty is induced by choosing a penalty set completely. In Section~\ref{sec: prob def} we have seen that the minimization problem for singleton penalty sets and reward sets of size exactly two is $\NP$-complete. 

For the rest of this section we consider a maximization problem where a reward is gained by choosing at least one member of a reward set and a penalty is induced by choosing all members of a penalty set.

\subsection{Laminar Sets}
We say a collection~$\mathcal{S}$ of sets is \emph{laminar} if and only if for two sets~$S_1,S_2\in\mathcal{S}$ either the intersection is empty or one is contained completely in the other. 
%With this, the set-inclusion~$"\subseteq"$ defines a \emph{partial order} on $\mathcal{S}$, i.e. it holds for all $S_1,S_2, S_3\in\mathcal{S}$ that 
%	\begin{itemize}
%		\item $S_1\subseteq S_1$ (reflexivity),
%		\item if $S_1\subseteq S_2$ and $S_2\subseteq S_1$, then $S_1$ is equal to $S_2$ (anti-symmetry),
%		\item if $S_1\subseteq S_2$ and $S_2\subseteq S_3$, then $S_1\subseteq S_3$ (transitivity).
%	\end{itemize} 

Now, a \emph{laminar RPSP instance} consists of a laminar collection~$\mathcal{S}\coloneqq \Rewards \cup \Penalties$. An example for a laminar RPSP instance is depicted in Figure~\ref{fig: laminar instance}. For the rest of this subsection we assume the reward (penalty) sets to be pairwise distinct, i.e. there are no two reward (penalty) sets in $\Rewards (\Penalties)$ that are equal. Note that there might be a reward set equal to a penalty set. 
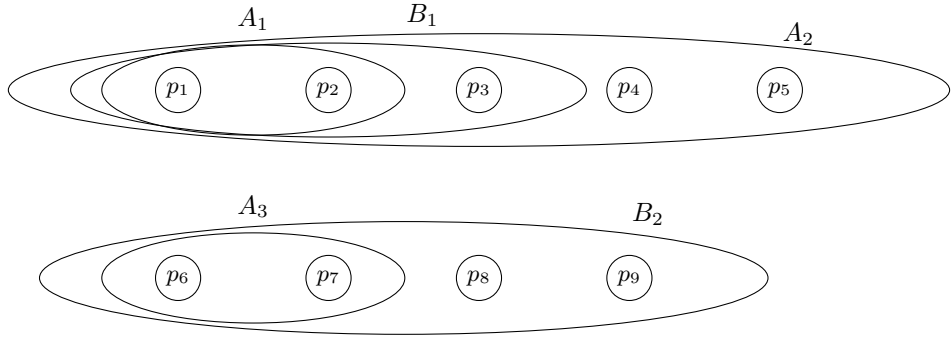
\begin{figure}
	\centering
	\begin{tikzpicture}[every edge/.style={draw=black,very thick}]
	\begin{scope}
	\draw 
	(0,0) node (p1) [vertex, minimum height=0.6cm] {\small $p_1$}
	(1, 0.95) node (A1) {$A_1$}
	(2,0) node (p2) [vertex, minimum height=0.6cm] {\small $p_2$}
	(4,0) node (p3) [vertex, minimum height=0.6cm] {\small $p_3$}
	(3.25, 1) node (B1) {$B_1$}
	(6,0) node (p4) [vertex, minimum height=0.6cm] {\small $p_4$}
	(8,0) node (p5) [vertex, minimum height=0.6cm] {\small $p_5$}
	(8.25,0.75) node (A2) {$A_2$};
	
	\node [fit=(p1) (p2), draw, ellipse, minimum height=1cm] {};
	\node [fit=(p1) (p2) (p3), draw, ellipse, minimum height=1.25cm] {};
	\node [fit=(p1) (p2) (p3) (p4) (p5), draw, ellipse, minimum height=1.5cm] {};
	\end{scope}
	
	\begin{scope}[shift={(0,-2.5)}]
	\draw 
	(0,0) node (p6) [vertex, minimum height=0.6cm] {\small$p_6$}
	(1, 0.95) node (A3) {$A_3$}
	(2,0) node (p7) [vertex, minimum height=0.6cm] {\small$p_7$}
	(4,0) node (p8) [vertex, minimum height=0.6cm] {\small$p_8$}
	(6,0) node (p9) [vertex, minimum height=0.6cm] {\small$p_9$}
	(6.25, 0.85) node (B2) {$B_2$};
	
	\node [fit=(p6) (p7), draw, ellipse, minimum height=1cm] {};
	\node [fit=(p6) (p7) (p8) (p9), draw, ellipse, minimum height=1.5cm] {};
	\end{scope}

	\end{tikzpicture}
	\caption{Example of an RPSP instance with laminar sets.}\label{fig: laminar instance}
\end{figure}

Before we continue, we recall the definition of the \emph{irreducible core of a graph} (cf. \cite{krumke2009graphentheoretische}). Let $G=(V,E)$ be a graph. We call a graph~$G'=(V,E')$ irreducible core of $G$ if the following properties hold.
\begin{itemize}
	\item $G'$ is a subgraph of $G$.
	\item Let $u,v\in V$. There exists a path from $u$ to $v$ in $G$ if and only if there exists a path from $u$ to $v$ in $G'$. 
	\item Let $G''$ be a subgraph of $G'$ with $G''\neq G'$. Then, there exist at least two nodes~$u,v$ such that there exists a $u-v-$path in $G'$ but not in $G''$.
\end{itemize}

First, we construct a graph~$G=(V, E)$ for a laminar RPSP instance. For each set~$S\in\mathcal{S}$ we add a node~$u_S$ to $V'$. We add a directed edge~$(u_{S_1}, u_{S_2})$ if $S_2\subseteq S_1$. Now the \emph{tree representation}~$T=(V, E')$ of the collection~$S$ is defined as the irreducible core of $G$. An example of the irreducible core can be found in Figure~\ref{fig: irred core}. We know that the irreducible core can be computed in $\O(m(n + m))$ (cf. Algorithm~5.3, \cite{krumke2009graphentheoretische}), where $m$ denotes the number of edges of $G$ whereas $n$ denotes the number of nodes of $G$. 

\begin{figure}
	\centering
	\begin{tikzpicture}[every node/.style={fill=white,rectangle}, every edge/.style={draw=black,very thick}]
	\begin{scope}
	\draw 
	(-0.5, 1) node (G) {$G$}
	(0.5, 0) node[vertex, minimum height=0.9cm] (A1) {$u_{A_1}$}
	(-1, -2) node[vertex, minimum height=0.9cm] (B1) {$u_{B_1}$}
	(2,-2) node[vertex, minimum height=0.9cm] (A2) {$u_{B_2}$}
	(-1, -4) node[vertex, minimum height=0.9cm] (A3) {$u_{A_3}$}
	(1, -4) node[vertex, minimum height=0.9cm] (B2) {$u_{A_2}$}
	(3, -4) node[vertex, minimum height=0.9cm] (A4) {$u_{A_4}$};
	
	\path[->] (A1) edge (B1);
	\path[->] (A1) edge (A2);
	\path[->] (B1) edge (A3);
	\path[->] (A2) edge (A4);
	\path[->] (A2) edge (B2);
	\path[->] (A1) edge (A3);
	\path[->] (A1) edge[bend left] (A4);
	\path[->] (A1) edge (B2);
	\end{scope}
	
	\begin{scope}[shift={(7,0)}]
	\draw 
	(-0.5, 1) node (T) {$T$}
	(0.5, 0) node[vertex, minimum height=0.9cm] (A1) {$u_{A_1}$}
	(-1, -2) node[vertex, minimum height=0.9cm] (B1) {$u_{B_1}$}
	(2,-2) node[vertex, minimum height=0.9cm] (A2) {$u_{B_2}$}
	(-1, -4) node[vertex, minimum height=0.9cm] (A3) {$u_{A_3}$}
	(1, -4) node[vertex, minimum height=0.9cm] (B2) {$u_{A_2}$}
	(3, -4) node[vertex, minimum height=0.9cm] (A4) {$u_{A_4}$};
	
	\path[->] (A1) edge (B1);
	\path[->] (A1) edge (A2);
	\path[->] (B1) edge (A3);
	\path[->] (A2) edge (A4);
	\path[->] (A2) edge (B2);
	\end{scope}
	\end{tikzpicture}
	\caption{Example of a graph~$G$ and its tree representation~$T$.}\label{fig: irred core}
\end{figure}
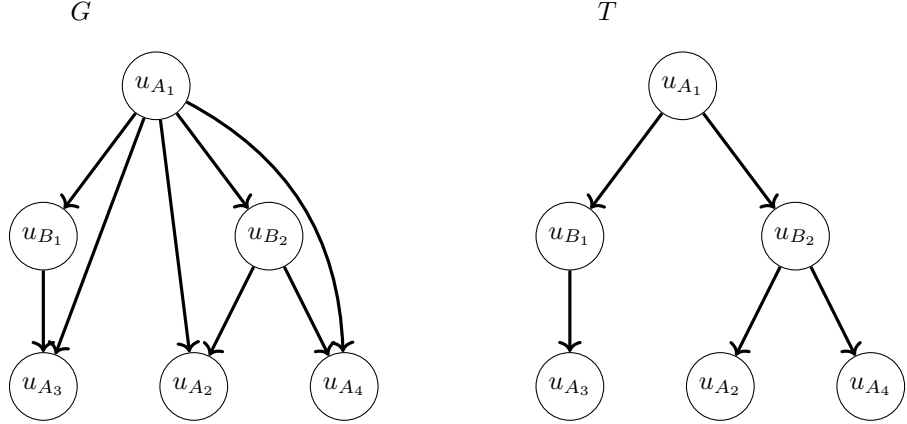 

A tree representation~$T$ such that all leaves correspond to singleton reward sets is called \emph{nice tree representation}. The following lemmas show that one can construct a nice tree representation by applying some post-processing steps to a given tree representation. 

\begin{lemma}
	Let $T$ be a tree representation and $v$ its root node. If a leaf node~$u_A$ of $T$ corresponds to a reward set~$A$ consisting of more than one element, one can contract the unique path~$P\coloneqq(u_A,\dots, v)$ and replace it by a node~$u'$ which corresponds to a singleton reward set~$A'$ with reward~$\sum_{i: u_{A_i}\in P} a_i$ and is adjacent to $v$. 
\end{lemma}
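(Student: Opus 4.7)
My proof plan is to show that the described contraction yields a valid tree representation of an instance with the same optimal objective value. The key structural observation is that since $T$ is the tree representation of a laminar collection and $u_A$ is a leaf, the path $P$ traces a nested chain $A = S_0 \subsetneq S_1 \subsetneq \cdots \subsetneq S_k$, where $S_k$ is the set associated with the root $v$. In particular, every reward set $A_i$ whose node lies on $P$ properly contains $A$. In the current maximization formulation—where a reward set contributes as soon as it is hit—choosing any single player of $A$ simultaneously hits every reward set on the chain and earns a total reward of $\sum_{i: u_{A_i} \in P} a_i$.

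I would first make the contraction precise: delete every internal node of $P$ together with $u_A$, introduce a new leaf $u'$ attached to $v$, and let $u'$ correspond to a singleton reward set $A'$ with reward $a' \coloneqq \sum_{i: u_{A_i}\in P} a_i$. Any subtree branching off an intermediate node of $P$ is reattached to $v$; since its root set is a subset of the set associated with $v$, the resulting collection remains laminar, and the resulting graph is the tree representation of this modified collection. I would then establish value preservation via a mutual map between feasible selections. Given a selection $X$ in the original instance, define $X'$ in the contracted instance by keeping all choices on players outside $A$ unchanged and including the representative of $A'$ if and only if $X$ hits $A$; by the structural observation, the contribution of the reward sets on $P$ to the original objective equals $a'$ whenever some player of $A$ is chosen and $0$ otherwise, matching exactly the contribution of $A'$ in the contracted instance. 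The reverse direction is symmetric: lift by choosing an arbitrary single player of $A$ whenever $A'$ is selected.

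The main obstacle is the bookkeeping for any penalty-set nodes that lie on $P$, since the stated reward sum does not mention them. I would address this either by invoking the tacit assumption that $P$ consists only of reward-set nodes (a configuration that subsequent post-processing lemmas for a nice tree representation will ensure), or by a case analysis showing that in any optimal selection the deleted penalty sets along $P$ would not contribute to the objective. Verifying this neutral behaviour for penalty nodes, together with checking laminarity of the reattached subtrees and the correctness of the singleton's aggregated reward, is the delicate point of the argument.
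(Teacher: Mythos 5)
Your overall strategy---observe that $P$ is a nested chain, that a single element of $A$ hits every reward set whose node lies on $P$, and then exhibit a value-preserving correspondence between selections of the original and the contracted instance---is exactly the route the paper takes. The problem is that the gap sits precisely at the point you flag as ``the main obstacle'' and then leave open: the treatment of penalty-set nodes on $P$. Your first escape route (assume tacitly that $P$ contains only reward-set nodes) is not available: this lemma is the first post-processing step, the path from a leaf to the root may well pass through penalty-set nodes, and the subsequent lemmas treat different configurations (leaves that are penalty sets, two consecutive penalty nodes at the bottom of a path), so nothing ensures $P$ is penalty-free here. Your second escape route is the correct one, but you do not carry it out, and it is the only non-trivial content of the proof.

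The missing argument is short. Every set whose node lies on $P$ contains $A$, and $|A|\ge 2$ by hypothesis. Since $u_A$ is a leaf and the collection is laminar, any reward set meeting $A$ must contain $A$ (a reward set properly contained in $A$ would produce a descendant of $u_A$), so its node lies on $P$; hence choosing a second element of $A$ hits no reward set that the first element does not already hit, and can only help to cover penalty sets, which is never beneficial in the maximization formulation. An optimal solution therefore chooses at most one element of $A$, and consequently no penalty set containing $A$---in particular none whose node lies on $P$---is ever covered; these penalties can be neglected, and the contribution of $P$ to the objective is $\sum_{i:\,u_{A_i}\in P} a_i$ if some element of $A$ is chosen and $0$ otherwise, which is exactly what the singleton $A'$ with that aggregated reward reproduces in the contracted tree. (Your remark about reattaching subtrees that branch off interior nodes of $P$ addresses a point the paper's own proof passes over silently; making it explicit is fine and does not affect the value argument.)
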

\begin{proof}
	Suppose now the new reward set~$A'$ has reward~$\sum_{i: u_{A_i}\in P} a_i$ and consists of only one element. Furthermore, it is adjacent to the root node~$v$. We further contract the path~$(u_A,\dots, v)$ and denote the obtained graph by $T'$. 
	
	First, we consider the graph~$T$. Choosing one element~$x$ from $A$ gives a reward of $\sum_{i: u_{A_i}\in P} a_i$ since $x$ is contained in every reward set~$A_i$ for which $u_{A_i}$ is contained in $P$ by construction of the tree representation~$T$. Furthermore, since $A$ contained at least two elements, no penalty set which corresponds to a node in $P$ can be covered by choosing only $x$. An optimal solution would therefore never choose more than one element of $A$ and all penalty sets containing $A$ can be neglected. In contrast, in the new graph~$T'$, if we choose the element contained in $A'$, we obtain a profit~$\sum_{i: u_{A_i}\in P} a_i$. Since the path~$P$ in $T$ is contracted, no reward is counted twice.
\end{proof}

\begin{lemma}
	Let $T$ be a tree representation and $v$ its root node. Given a path~$(u_1,u_2,\dots, v)$ from a leaf~$u_1$ to the root~$v$. We can assume that at least one of the corresponding sets~$S_1, S_2\subseteq \mathcal{S}$ is a reward set. 
\end{lemma}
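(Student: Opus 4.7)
My plan is to establish the lemma by exhibiting a value-preserving reduction that eliminates the unwanted configuration. Suppose, contrary to the claim, that both $S_1=B_1$ and $S_2=B_2$ correspond to penalty sets. Since $u_1$ is a child of $u_2$ in the tree representation one has $B_1\subsetneq B_2$, and since $u_1$ is a leaf of $T$, no element of $\mathcal{S}$ is a proper subset of $B_1$. Combined with laminarity, this forces every reward set meeting $B_1$ to either equal $B_1$ or contain $B_1$ strictly, so the corresponding reward node is an ancestor of $u_1$ in $T$.

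I would then define a modified instance $I'$ obtained from $I$ by deleting $B_1$ from $\Penalties$ and raising the penalty of $B_2$ to $b_1+b_2$. In the tree representation of $I'$ the node $u_1$ has disappeared, so once this modification is shown to be value-preserving, one may simply iterate it (possibly on a sibling branch) until the leaf end of the given path contains a reward node, which is exactly the desired conclusion.

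The core step is the claim $\val(I)=\val(I')$. Comparing objectives pointwise, for any $X\subseteq N$ they agree unless $B_1\subseteq X\subsetneq B_2$, where the $I'$-objective exceeds the $I$-objective by exactly $b_1$; this already yields $\val(I')\geq\val(I)$. For the reverse inequality I would show that any optimum $X^\ast$ of $I'$ can be pushed out of this ``discrepant regime'' without loss: pick any $e\in B_1$ and set $X':=X^\ast\setminus\{e\}$. No penalty set other than $B_1$ changes status, because any ancestor penalty set $B$ with $e\in B$ satisfies $B\supseteq B_2$, but if $B\subseteq X^\ast$ this would force $B_2\subseteq X^\ast$ and contradict the discrepant regime; and by the structural observation every reward set $A\ni e$ satisfies $B_1\subseteq A$, so provided $|B_1|\geq 2$, the elements of $B_1\setminus\{e\}\subseteq X'$ still witness $A$. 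Hence $X'$ is still optimal in $I'$ and lies outside the discrepant regime, where $\val_I(X')=\val_{I'}(X')=\val(I')$, giving $\val(I)\geq\val(I')$.

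The main obstacle is the singleton case $|B_1|=1$, in which removing the unique element of $B_1$ may destroy the only witness in $X^\ast$ of some reward set $A\supsetneq B_1$. I would address this either by invoking the natural convention that a set appearing simultaneously in $\Rewards$ and $\Penalties$ is identified with a single reward node in $T$ (so the leaf $u_1$ would already satisfy the lemma), or by treating a singleton penalty leaf as a separate preliminary simplification: the constant penalty $b_1$ can be absorbed into a local reward adjustment on the unique ancestor reward sets containing the element, leaving the merge argument above to cover the remaining generic configurations.
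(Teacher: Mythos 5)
Your overall strategy coincides with the paper's: when the leaf $u_1$ and its parent $u_2$ both correspond to penalty sets $B_1\subsetneq B_2$, eliminate the leaf by a value-preserving local modification of the instance and iterate. The paper simply deletes $u_1$, asserting that an optimal solution ``would never choose an element from $B_1\setminus B_2$'' (a garbled justification, since $B_1\subseteq B_2$ makes that set empty), and separately contracts the edge when $B_1=B_2$ --- a case that is vacuous under the paper's own assumption that penalty sets are pairwise distinct, a point you correctly exploit to obtain strict containment. Your variant additionally transfers the penalty $b_1$ onto $B_2$ and then runs an explicit exchange argument: remove one $e\in B_1$ from an optimum of $I'$; no other penalty set changes status because every set containing $e$ other than $B_1$ is an ancestor of $u_1$ and hence contains $B_2$; every reward set containing $e$ contains $B_1$ and is still witnessed by $B_1\setminus\{e\}$. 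This is more careful than the paper's one-line argument and is correct whenever $|B_1|\geq 2$. (The penalty transfer is actually unnecessary --- since no set of the laminar family lies strictly inside $B_1$, an optimum never benefits from covering $B_1$, so plain deletion already preserves the value --- but it does no harm.)

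The genuine gap is the case $|B_1|=1$, which you identify but do not close, and neither of your two suggested fixes works as stated. The convention of identifying a set that appears both as a reward and as a penalty set is beside the point: in the configuration of the lemma no reward set can equal $B_1$ (such a node would have to be a child of $u_1$, contradicting that $u_1$ is a leaf, or the parent $u_2$, contradicting that $u_2$ is a penalty node); the real difficulty is a reward set $A\supsetneq B_1=\{e\}$ whose only chosen witness is $e$. Likewise, ``absorbing $b_1$ into a local reward adjustment on the ancestor reward sets'' is not a well-defined operation, because those reward sets can also be witnessed by elements outside $B_1$, so lowering their rewards changes the value of solutions that never select $e$. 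A complete argument would have to show, for instance, that the witness $e$ can be swapped for another element $f$ of the smallest ancestor reward set containing $e$ without newly covering any penalty set, or else restrict the reduction to $|B_1|\geq 2$. To be fair, the paper's own proof is entirely silent on this case and has exactly the same hole; your write-up is the more honest of the two.
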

\begin{proof}
	Suppose we are given a path~$(u_1,u_2,\dots, v)$ from a leaf~$u_1$ to the root~$v$ where $u_1, u_2$ correspond to penalty sets~$B_1, B_2$. If $B_2$ contains at least one element that is not contained in $B_1$, then we can delete $u_1$ from $T$ since an optimal solution would never choose an element from $B_1\backslash B_2$. If $B_2$ is equal to $B_1$, then the edge~$(u_1, u_2)$ can be contracted and we add the penalty of $B_1$ to the penalty of $B_2$. 
\end{proof}

\begin{lemma}
	Let $T$ be a tree representation and $v$ its root node. If a leaf node~$u_1$ of $T$ corresponds to a penalty set~$B$, without changing the optimal solution value, one can either delete $u_1$ from $T$ or change the node~$u_1$ by its parent node~$u_2$ which corresponds according to the observation above to a reward set~$A$ with $A=B$.
\end{lemma}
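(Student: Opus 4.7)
By the preceding lemma, I may assume the parent $u_2$ of the leaf penalty node $u_1$ corresponds to a reward set $A$, and by construction of the tree representation $B \subseteq A$. I would split on whether $A = B$ or $B \subsetneq A$, handling exactly one of the two operations in each case.

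If $A = B$ as sets of players, I merge $u_1$ and $u_2$ into a single node that inherits the children of $u_2$ and simultaneously carries the reward $a_A$ (earned when any of its players is picked) and the penalty $b_B$ (paid when all of its players are picked). Under this merged representation, for every $X \subseteq N$ the two conditions $A \cap X \neq \emptyset$ and $B \subseteq X$ are still tracked exactly, so the objective value of every selection is unchanged and in particular the optimum is preserved.

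If $B \subsetneq A$, I delete $u_1$, i.e., remove $B$ from the penalty collection. Dropping a penalty set can only weaken the objective, so the optimum cannot decrease; it remains to show that it does not strictly increase. The key structural observation, combining laminarity with the fact that $u_1$ is a leaf, is that every reward set containing an element of $B$ must contain all of $A$, and hence contains every element $y \in A \setminus B$. Starting from an optimum $X^*$ of the reduced instance with $B \subseteq X^*$, I would then produce a selection $X'$ of the same value with $B \not\subseteq X'$: remove an arbitrary element of $B$ when $|B| \geq 2$, or swap the unique element $x$ of $B$ for some $y \in A \setminus B$ when $|B| = 1$. Either modification keeps every ancestor reward set hit, thanks to the observation above, while saving the penalty $b_B$.

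The delicate point, and the main obstacle, is the swap in the subcase $|B| = 1$, where adding $y$ might in principle cover some penalty set $P$ that was previously uncovered. Laminarity pins down the possible $P$: since $y \in P$ and $y \in A$, the set $P$ is comparable with $A$, so either $P \supseteq A$ (in which case $x$ was already in $P$, so the coverage status of $P$ is unaffected by the swap) or $P \subsetneq A$. I would rule out the latter damaging configuration by appealing to optimality of $X^*$ in the reduced instance, which would already have forced $y$ into the selection whenever doing so strictly improved the profit.
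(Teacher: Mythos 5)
Your case analysis and both operations coincide with the paper's: for $A=B$ you keep both sets at one position in the tree (the paper swaps $u_1$ and $u_2$ rather than merging them into a node of mixed type; this matters later, because the circulation-network construction treats reward nodes and penalty nodes differently, so you should keep two separate single-type nodes), and for $B\subsetneq A$ you delete $u_1$ and argue by an exchange. Your subcase $|B|\geq 2$ is correct and in fact more careful than the paper's one-line justification: since $u_1$ is a leaf, every set meeting $B$ contains $B$, so dropping one element of $B$ keeps every ancestor reward set hit and cannot cover a new penalty set.

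The genuine gap is exactly where you flagged it, the subcase $B=\{x\}$, and the proposed repair is a non-sequitur: optimality of $X^*$ in the reduced instance guarantees that no modification \emph{increases} the profit, but says nothing about whether adding $y$ covers some $P\subsetneq A$ and thereby \emph{decreases} it, and nothing forces $y$ into $X^*$. The damaging configuration does occur. Take $A=\{x,y\}$ with reward $a=1$, the penalty leaf $B=\{x\}$ with $b_B=10$, and a second penalty leaf $P=\{y\}$ with $b_P=10$ under the same parent $u_A$; this collection is laminar, $u_B$ is a leaf whose parent is the reward node $u_A$, and $A\setminus B=\{y\}$, so the only available swap is $x\to y$, which covers $P$ and strictly loses value. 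Worse, in this instance deleting $u_B$ really does change the optimum: the reduced instance has value $1$ (choose $x$), while the original has value $\max(0,\,1-\min(b_B,b_P))=0$. So no choice of $y$ can close this subcase as stated. For what it is worth, the paper's own proof has the same hole --- ``$x$ will hit all reward sets on the path, hence no element of $B$ is chosen'' silently assumes the replacement element incurs no new penalty --- so you have correctly isolated the weak point of the lemma, but neither your argument nor the paper's actually establishes it; a correct version needs an extra hypothesis or must charge the penalty $b_B$ to the element $x$ rather than discard it.
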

\begin{proof}
	Let $(u_1, u_2, \dots, v)$ denote the path from $u_1$ to the root~$v$. Because of the observation above we know that $u_2$ corresponds to a reward set~$A$. First, if there exists an element~$x$ in $A$ that is not contained in $B$, no element from $B$ will be chosen in an optimal solution since $x$ will hit all reward sets on the path. Hence, removing $u_1$ from $T$ does not change the optimal solution value. If $A$ is equal to $B$, we can simply change the position of $u_1$ and $u_2$ without violating the definition of $T$. Thus, in both cases we obtain a leaf which corresponds to a reward set. 
\end{proof}

Thus, without loss of generalization, one can assume to be $T$ a nice tree representation. Given a nice tree representation~$T$ of the laminar RPSP, we construct a circulation network graph~$C(T)$ by adding a source~$s$ and a sink~$t$ to the node set. Furthermore, we add edges between the source~$s$ and each leaf~$n$ of $T$ with capacity~1 and profit~0, an edge between the root node~$v$ and the sink~$t$ with $\infty$ capacity and profit~0, and an edge between $t$ and $s$ with $\infty$ capacity and profit~0. In addition, we introduce edges between two nodes~$u_1, u_2$ from $V(T)$ as follows.
If $u_2$ is the child of $u_1$ and $u_1$ corresponds to a reward set~$A$ with reward~$a$, we add two parallel edges~$e_1, e_2 = (u_1, u_2)$ with capacity~$c(e_1) = 1, c(e_2) = \infty$ and profit~$p(e_1) = a, p(e_2) = 0$. If $u_2$ is the child of $u_1$ and $u_1$ corresponds to a penalty set~$B$ with reward~$b$, we add two parallel edges~$e_1, e_2 = (u_1, u_2)$ with capacity~$c(e_1) = |B| - 1, c(e_2) = 1$ and profit~$p(e_1) = 0, p(e_2) = -b$. An example of the network graph~$C(T)$ is depicted in Figure~\ref{fig: circ graph}, where the first entry of an edge label denotes its capacity whereas the second entry denotes its profit per unit of flow.

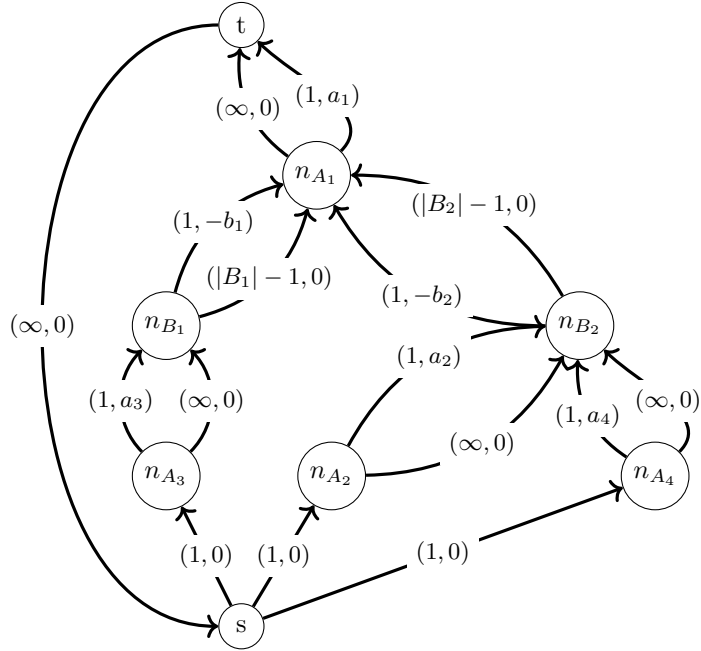
\begin{figure}
	\centering
	\begin{tikzpicture}[every node/.style={fill=white,rectangle}, every edge/.style={draw=black,very thick}]
	\begin{scope}
	\draw 
	(0, 2) node[vertex, minimum height=0.6cm] (t) {t}
	(1, 0) node[vertex, minimum height=0.9cm] (A1) {$n_{A_1}$}
	(-1, -2) node[vertex, minimum height=0.9cm] (B1) {$n_{B_1}$}
	(4.5,-2) node[vertex, minimum height=0.9cm] (A2) {$n_{B_2}$}
	(-1, -4) node[vertex, minimum height=0.9cm] (A3) {$n_{A_3}$}
	(1.2, -4) node[vertex, minimum height=0.9cm] (B2) {$n_{A_2}$}
	(5.5, -4) node[vertex, minimum height=0.9cm] (A4) {$n_{A_4}$}
	(0, -6) node[vertex, minimum height=0.6cm] (s) {s};
	
	\path[<-] (A1) edge[bend left] node {\small $(|B_1|-1,0)$} (B1);
	\path[<-] (A1) edge[bend left] node {\small $(|B_2|-1,0)$} (A2);
	\path[<-] (B1) edge[out=225, in=135] node {\small $(1,a_3)$} (A3);
	\path[<-] (A2) edge[out=315, in=45] node {\small $(\infty,0)$} (A4);
	\path[<-] (A2) edge[bend left] node {\small $(\infty,0)$} (B2);
	\path[<-] (A1) edge[bend right] node {\small $(1,-b_1)$} (B1);
	\path[<-] (A1) edge[bend right] node {\small $(1,-b_2)$} (A2);
	\path[<-] (B1) edge[out=315, in=45] node {\small $(\infty,0)$} (A3);
	\path[<-] (A2) edge[bend right] node {\small $(1,a_4)$} (A4);
	\path[<-] (A2) edge[bend right] node {\small $(1,a_2)$} (B2);
	
	\path[->] (s) edge node {\small $(1,0)$} (A3);
	\path[->] (s) edge node {\small $(1,0)$} (B2);
	\path[->] (s) edge node {\small $(1,0)$} (A4);
	
	\path[->] (A1) edge[in=315, out=45] node {\small $(1,a_1)$} (t);
	\path[->] (A1) edge[bend left] node {\small $(\infty,0)$} (t);
	
	\path[->] (t) edge[out=180, in=180] node {\small $(\infty, 0)$} (s);
	
	\end{scope}
	\end{tikzpicture}
	\caption{Example of the circulation network graph.}\label{fig: circ graph}
\end{figure}

\begin{theorem}\label{thm: laminar rpsp}
	The laminar RPSP can be solved in time~$\O(m(n+m) + T_{MF}(n,m))$. 
\end{theorem}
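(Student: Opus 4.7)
The plan is to reduce the laminar RPSP to a maximum-profit integer circulation in $C(T)$. First I would build the directed graph $G$ on $\mathcal{S}$ and extract its irreducible core to obtain a tree representation $T$; by the result cited from~\cite{krumke2009graphentheoretische} this costs $\O(m(n+m))$. I would then apply the three post-processing lemmas to produce a nice tree representation in linear additional time, construct $C(T)$ in time linear in $|V(T)|$, and finally solve a single max-profit circulation (equivalently a min-cost / max-flow problem) on $C(T)$, which accounts for the $T_{MF}(n,m)$ term in the running time.

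The main technical step will be to argue that the optimum max-profit integer circulation in $C(T)$ has exactly the same value as the optimum of the laminar RPSP. Given any $X\subseteq N$, I would construct a circulation by sending one unit of flow from $s$ to every leaf whose singleton player lies in $X$, routing each unit upward along its unique leaf-to-root path, and closing the circulation via the $t\to s$ arc; by the nice-tree structure the total flow across any internal node $u_S$ equals $|S\cap X|$. For a reward node $u_A$ with reward $a$, at least one unit leaves $u_A$ toward its parent iff $A$ is hit, and placing that unit on the $(1,a)$ edge collects the reward exactly when $A$ is hit. For a penalty node $u_B$ with penalty $b$, the flow $|B\cap X|$ fits entirely on the $(|B|-1,0)$ edge unless $|B\cap X|=|B|$, in which case exactly one unit is forced onto the $(1,-b)$ edge, so the penalty is charged precisely when $B$ is covered. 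In the reverse direction I would rely on integrality of capacities to restrict to integer optima, then read off $X$ as the set of leaves receiving flow from $s$ and argue that the corresponding RPSP profit matches the circulation value.

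The hard part will be the exchange argument that pins down the bijection: in any optimum integer circulation, each parallel-edge pair must be used greedily, so that the capacity-$1$ reward edge of profit $a$ is saturated before its zero-profit partner, and the capacity-$(|B|-1)$ zero-profit penalty edge is saturated before the $(1,-b)$ edge. This should follow from the observation that any unit of flow on a lower-profit edge whose higher-profit partner still has slack can be rerouted to strictly improve the profit; together with a symmetric argument ruling out gratuitous circulation through the tree beyond what is delivered from $s$, this gives the value-preserving correspondence between integer circulations and subsets $X$. Combining the $\O(m(n+m))$ preprocessing with a single $T_{MF}(n,m)$ min-cost flow computation then yields the claimed running time.
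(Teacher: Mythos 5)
Your proposal matches the paper's proof essentially step for step: both reduce the problem to a maximum-profit circulation on the network $C(T)$ built from the irreducible-core tree representation, and both establish the value-preserving correspondence between subsets $X\subseteq N$ and integer circulations via the observation that an optimal circulation uses each parallel-edge pair greedily (saturating the capacity-$1$ reward edge first, and the zero-profit penalty edge before the $(1,-b)$ edge). Your write-up is, if anything, slightly more explicit than the paper's about the rerouting/exchange argument and about where the $\O(m(n+m))$ and $T_{MF}(n,m)$ terms come from, so no substantive gap relative to the published argument.
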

\begin{proof}
	Let $C(T)$ denote the circulation network of a tree representation~$T$ of a laminar RPSP instance. 
	
	Let a maximum profit circulation~$f$ in $C(T)$ with profit~$p$ be given. For a node corresponding to a reward set~$A$, the outgoing edge with capacity~1 and profit~$a$ is always chosen over the parallel edge with profit~0 by $f$. For a node corresponding to a penalty set~$B$, the circulation~$f$ tries to send as much unit of flow over the edge with profit~0. For the profit of $f$ we count exactly the reward of all reward sets~$A$ for which there exists a path~$(u_A,\dots, v)$ from a leaf node~$u_A$ to the root node~$v$ with $f(\delta^-(u_A))=1$ and the penalty of all penalty sets~$B$ for which $f(\delta^-(u_B))=|B|$. Moreover, a reward of a reward set is counted at most once since the capacity allows to send at most one unit of flow over an edge with positive profit. Now we construct a solution~$S$ to the laminar RPSP with the same profit by choosing one element in each reward set for which the corresponding node has an ingoing edge with one unit of flow.
	
	Conversely, let $S$ be a solution of the laminar RPSP with profit~$p$. Now we are sending one unit of flow to every leaf node that corresponds to a reward set which contains an element in $S$. By similar arguments as above we obtain that a circulation~$f$ with profit~$p$ can be constructed.	
\end{proof}

\subsection{Reward Singletons, Bounded Tree-Width Reduced Connection Graph}
In this subsection we consider an instance~$I$ of the RPSP with reward set singletons and a connection graph~$G$ of bounded tree-width. First, if all penalty sets of $I$ are exactly of size two, we can construct the simplified connection graph~$SCG(I)$. If now the penalty of each penalty set is larger than the sum of all rewards of the elements contained in it, the RPSP becomes exactly the \emph{MIS}, which is shown in the proof of Theorem~\ref{thm: min-RPSP complexity}. 
%Note that this further implies that there does not exist a constant factor approximation that runs in polynomial time for the RPSP, see e.g. \cite{garey1979computers}. %TODO is this in garey johnson?. 
Since the \emph{MIS} is solvable in polynomial time for an instance with bounded tree-width, the RPSP can also be solved in polynomial time by using a dynamic programming approach. 

%\section{Some Notes on Tree-width}\label{sec: preliminaries}
%In this section we briefly recall the preliminaries. 
%	\subsection{The Set Cover Problem and the Hitting S}\todo{This needs all to be done.}
%	The \emph{Set Cover Problem} can be formulated as an integer program.
%	\begin{align*}
%	\min 
%	\end{align*}
%	It is known that the decision problem of the SCP is $\NP$-complete and therefore the optimization problem is $\NP$-hard. 
%	\subsection{The Hitting Set Problem}
%	The \emph{Hitting Set Problem} can be formulated as an integer program.
%	\begin{align*}
%	\min
%	\end{align*}
%	\subsection{Tree Width}

In the following, we generalize the problem in two directions -- penalties and rewards are arbitrary as well as the size of the penalty sets. Since we consider the reward sets to be singletons, we can define the reduced connection graph, which we assume to have a bounded tree-width since this is essential for using our dynamic programming approach. 

For the rest of this subsection, we consider an instance~$I$ with reward set singletons and arbitrary penalty set size. Also, the number of reward and penalty sets is not bounded by a constant. We define a profit measure function~$p: V\mapsto \R$ that maps a reward set node to its associated reward and a penalty set node to its associated penalty. We extend this function naturally by mapping a set of nodes to the sum of the profits of each node in the set.

Let $RCG(I)$ denote the reduced connection graph for the given instance and $D=(S,T)$ its nice tree decomposition with width($D$) = $k$. For each node~$i\in V(T)$ of the tree~$T$ we define a corresponding induced subgraph~$G_i = G[V_i]$ where $V_i$ is the union of the bag~$X_i$ and all bags~$X_j$ with $j$ being a descendant of~$i$ in the tree~$T$. Now, by starting from the leaves of $T$, we compute the values~$C_i$ for every node~$i\in V(T)$, where $C_i$ contains an entry for every combination of a subset~$S\subseteq X_i$ and of a possible degree of a penalty set node for each of the $h$ penalty set nodes in the bag~$X_i$. The values of the entries are given by the maximum profit selection in the corresponding induced subgraph~$G_i$ --- more formally:
\begin{align}
\CiEntry \coloneqq \max \{&p(M) | M \text{ is a selection of reward nodes in } G_i  \notag\\
& \text{ s.t. } \quad M\cap X_i = S \label{cons: sel tree dec 1}\\
& |M\cap P_l| = n_l \quad \text{ for all penalty set nodes } P_l \label{cons: sel tree dec 2} \},
\end{align}
where $P_l$ denotes a penalty set node in $X_i$. Since we assumed that $G$ has bounded tree-width~$k$, at most $k$ penalty sets can be contained in one bag. If there exists no selection that fulfills the constraints~\eqref{cons: sel tree dec 1}, \eqref{cons: sel tree dec 2} for a given subset~$S$ and given degrees~$n_1,\dots,n_h$, we define $C_i(S, n_1, \dots, n_h) \coloneqq -\infty$.

\subsubsection{Leaves}
Let~$X_i$ be a leaf of the tree, i.e. $X_i=\{v\}$ since we have a nice tree decomposition. We can compute the entries of $C_i$ as follows. If $v$ is a reward node, we get $C_i(\{v\}) = r_v$. If $v$ is a penalty set, then we get $C_i(\{v\}, 0) = 0$ and set $C_i(\{v\}, 1) = -\infty$ since this case is not possible. Further, we set $C_i(\emptyset) = 0$.

Thus, the entries of a leaf node can be computed in $\O(1)$. 

\subsubsection{Forget Nodes}
For a forget node~$X_i$ we have $X_i = X_j \backslash \{v\}$ for a node~$v$ (see Figure~\ref{fig: schematic forget node} for a schematic representation). Note that by the definition of a tree decomposition, all neighbors of~$v$ are contained in~$G_i$. In particular, for each neighbour~$z$ of $v$ there exists at least one descendant of~$X_i$ that contains both $v$ and $z$. That being said, if a penalty set node is dropped once, it cannot be contained in any of the following bags. 

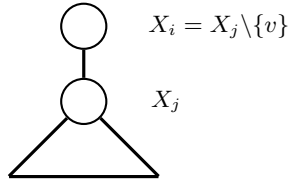
\begin{figure}[hbtp!]				
	\centering		
	\begin{tikzpicture}[every node/.style={fill=white,rectangle}, every edge/.style={draw=black,very thick}]
	\begin{scope}
	\draw
	(0,0) node (1) [bagvertex] {}
	(1.8,0) node (1a) [novertex] {\small{$X_i = X_j \backslash \{v\}$}}
	(0,-1) node (2) [bagvertex] {}
	(1.1,-1) node (2a) [novertex] {\small{$X_j$}}
	(-1,-2) node (3) [novertex] {}
	(1,-2) node (4) [novertex] {};
	
	\path[-] (1) edge (2);
	\path[-] (2) edge (3);
	\path[-] (2) edge (4);
	\path[-] (3) edge (4);
	\end{scope}
	\end{tikzpicture}
	\caption{Schematic representation of a forget node~$X_i$.}\label{fig: schematic forget node}
\end{figure}

\begin{lemma}
	It holds:
	\begin{align*}
	\CiEntry = \max \{ &C^*, C_j(S\cup\{v\}, n_1', \dots, n_h') \},
	\end{align*}
	where 
	\begin{align*}
	C^* \coloneqq \begin{cases}
	\CjEntry - p(v) &  \text{ if $N(v)\subseteq S$,} \\
	\CjEntry & \text{ otherwise,}
	\end{cases}
	\end{align*}
	and 
	\begin{align*}
	n_l' \coloneqq \begin{cases}
	n_l & \text{ if $v$ is not adjacent to $P_l$,}\\
	n_l+1 & \text{ otherwise.}
	\end{cases}
	\end{align*}
	%$n_l'$ is equal to $n_l$ if the node~$v$ is not adjacent to $P_l$ and equal to $n_l+1$ if $v$ and $P_l$ are adjacent.
	%\todo{Das muss die Aussage noch abgeändert werden, da der Fall von einem neuen PenaltyNode noch nicht berücksíchtigt ist.} DONE?
\end{lemma}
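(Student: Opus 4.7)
My plan is to prove the recurrence by a case analysis on whether the forgotten vertex~$v$ lies in the selection~$M$ realising $C_i(S, n_1, \dots, n_h)$. Since $V_i = V_j$ and $X_i = X_j \setminus \{v\}$, the induced subgraphs $G_i$ and $G_j$ coincide and every selection in $G_i$ is simultaneously a selection in $G_j$; only the constraint on the bag-intersection and the associated counters differ between the two entries, and these two possibilities ($v\in M$ or $v\notin M$) partition the feasible set.

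In the case $v \in M$, the intersection with the larger bag becomes $M \cap X_j = S \cup \{v\}$. For every penalty node $P_l \in X_i \subseteq X_j$ I would track how moving $v$ out of the bag changes the counter: with the DP convention used here, this change is exactly $+1$ when $v$ is adjacent to $P_l$ in $RCG(I)$ and $0$ otherwise, giving $n_l' = n_l + 1$ or $n_l' = n_l$ as stated in the lemma. This sets up a profit-preserving bijection between feasible selections with $v \in M$ and the selections enumerated by $C_j(S \cup \{v\}, n_1', \dots, n_h')$, so the supremum over this case equals that entry.

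In the case $v \notin M$, we have $M \cap X_j = S$ and all counters are unchanged, so the supremum is essentially $C_j(S, n_1, \dots, n_h)$ up to a correction that is finalised precisely at this forget step. By the tree decomposition property, every neighbour of $v$ already lies in $V_j = V_i$, and once $v$ leaves the bag it will never re-enter; hence any profit contribution tied to $v$ and its neighbourhood must be committed now rather than in an ancestor bag. I would then verify that this contribution is exactly $-p(v)$ when $N(v) \subseteq S$ (the scenario in which the neighbourhood-dependent condition attached to $v$ is fully witnessed inside the current bag) and zero otherwise, which matches the two-way definition of $C^*$.

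Taking the maximum over the two disjoint cases gives the claimed recurrence. The convention $C_\cdot = -\infty$ on infeasible parameter combinations is preserved because in the absence of a feasible~$M$ both sides evaluate to $-\infty$. I expect the main obstacle to be the second case: one has to argue carefully that $N(v) \subseteq S$ is precisely the condition under which the correction $-p(v)$ applies, leveraging the tree-decomposition guarantee $N(v)\subseteq V_j$ so that no further neighbour of~$v$ can be revealed after the forget step. The first case is then essentially routine counter bookkeeping.
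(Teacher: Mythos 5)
Your overall structure mirrors the paper's: split on whether the forgotten node~$v$ lies in the selection~$M$, match the case $v\in M$ to the entry $C_j(S\cup\{v\},n_1',\dots,n_h')$, and in the case $v\notin M$ apply the correction $-p(v)$ exactly when $N(v)\subseteq S$. (The paper organizes the cases by whether $v$ is a reward node or a penalty node rather than by membership in $M$, but since a penalty node can never lie in a selection the two partitions coincide.) The first case is, as you say, routine bookkeeping.

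The genuine gap is in the second case, and it is exactly the step you defer with ``I would then verify.'' The fact you propose to lean on, $N(v)\subseteq V_j$, only shows that no neighbour of $v$ can appear \emph{above} the forget bag, i.e.\ that the penalty decision must be finalized at this step. It does not show that the decision can be read off from $S=M\cap X_i$: a neighbour $z\in N(v)$ could have been selected by $M$ but already forgotten in a descendant bag, so that $N(v)\subseteq M$ while $N(v)\not\subseteq S$; the test $N(v)\subseteq S$ would then fail to charge $p(v)$ even though the penalty set is fully covered. The paper's proof spends essentially all of its effort on precisely this point: it states the claim that if a selection $M$ in $G_i$ contains all of $N(v)$ then all of $N(v)$ lies in the current bag, and argues it via the connectedness (path) condition of the tree decomposition applied to the bags containing $v$ and the bags containing each neighbour. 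Without an argument of this kind the bag-local test $N(v)\subseteq S$ is unjustified, so your proposal as written does not yet establish the recurrence.
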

\begin{proof}
	First, suppose $v$ is a reward set node. In this case, we distinguish between two subcases. 
	
	For this let $v$ be part of the selection~$M$. Then we get $M\cap X_j = S \cup \{v\}$ and, thus,
	\begin{align*}
	\CiEntry = C_j(S\cup \{v\}, n_1',\dots, n_h'),
	\end{align*}
	with the definition of $n_l'$ above.
	
	If $v$ is not contained in the selection~$M$, then we get $M\cap X_i = M\cap X_j = S$ and, thus, 
	\begin{align*}
	\CiEntry = \CjEntry.
	\end{align*}
	
	Now suppose $v$ is a penalty set node. Clearly, $v$ cannot be contained in the selection~$M$ since it is a penalty set node.
	
	We claim that if a selection~$M$ in $G_i$ chooses all neighbors of the introduced node~$v$, then all the neighbors are contained in the current bag~$X_i$. In order to prove this, suppose there exists a neighbor~$x$ of $v$ that is chosen in the selection~$M$ but not in the current bag~$X_i$. Since $x$ and $v$ are adjacent, there exists a bag~$X_t$ in the nice tree decomposition that contains both. Since by assumption $x$ is not in $X_i$, the bag~$X_t$ is either a descendant or an ancestor of $X_i$. Since in the selection~$M$ all neighbors of $v$ are chosen, $x$ must be contained in some descendant~$X_s$ of $X_i$. Now suppose $X_t$ is an ancestor of $X_i$. Then, by the path condition of the tree decomposition, $x$ must be contained in every bag on the path from $X_s$ to $X_t$ --- a contradiction since $x$ is assumed to be not contained in $X_i$. For the other direction, suppose $X_t$ is a descendant of $X_i$. This is not possible since $X_i$ is the first bag in which the node $v$ is introduced. 
	
	Thus, since the nodes contained in $S$ have to be chosen by the selection, the question if $v$ induces a new penalty depends solely on the current subset~$S$. We get
	\begin{align*}
	\CiEntry = \begin{cases}
	\CjEntry - p(v) &  \text{ if $N(v)\subseteq S$,} \\
	\CjEntry & \text{ otherwise.}
	\end{cases}
	\end{align*}
\end{proof}
The number of subsets of a bag is given by $2^{k+1}$, since at most~$k$ elements are contained in a bag and the number of possible degree combinations is given by $n^{k+1}$, since one penalty set can have at most~$n$ neighbors in the connection graph and at most~$k+1$ penalty sets are contained in a bag. Thus, the running time of computing all values in a forget node is 
\begin{align*}
\O(2^{k+1}\cdot n^{k+1}).
\end{align*}

\subsubsection{Introduce Nodes}
Let $X_i$ be an introduce node, i.e. $X_i = X_j \cup \{v\}$ for a node~$v\in V(G)$ and the child node~$j$ in the tree decomposition (see Figure~\ref{fig: schematic introduce node}).
\begin{figure}[hbtp!]				
	\centering		
	\begin{tikzpicture}[every node/.style={fill=white,rectangle}, every edge/.style={draw=black,very thick}]
	\begin{scope}
	\draw
	(0,0) node (1) [bagvertex] {}
	(1.8,0) node (1a) [novertex] {\small{$X_i = X_j \cup \{v\}$}}
	(0,-1) node (2) [bagvertex] {}
	(1.1,-1) node (2a) [novertex] {\small{$X_j$}}
	(-1,-2) node (3) [novertex] {}
	(1,-2) node (4) [novertex] {};
	
	\path[-] (1) edge (2);
	\path[-] (2) edge (3);
	\path[-] (2) edge (4);
	\path[-] (3) edge (4);
	\end{scope}
	\end{tikzpicture}
	\caption{Schematic representation of an introduce node~$X_i$.}\label{fig: schematic introduce node}
\end{figure}
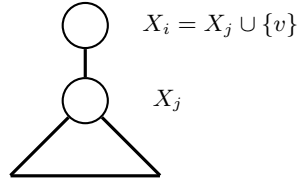
Suppose first, the introduced node~$v$ is a penalty set node. Then, for every considered subset~$S$ of the current bag~$X_i$, the node~$v$ cannot be contained in $S$ since $M$ is a subset of reward set nodes and $S=M\cap X_i$ holds true. 

Suppose now the introduced node~$v$ is a reward set node. We distinguish two different cases, either $v$ is contained in $S$ or not.

First, suppose $S$ does not contain $v$. We claim that 
\begin{align*}
\CiEntry = \CjEntry.
\end{align*}

Since $M\cap X_i = S = M\cap X_j$ holds, the optimal selection~$M$ in $G_j$ is a feasible selection in $G_i$. Thus, we get 
\begin{align*}
\CiEntry \geq \CjEntry.
\end{align*}
For the other inequality, take an optimal selection~$M_i$ in $G_i$. Since $v$ is not in $S$, we know that $M_i\cap X_i = S = M_i\cap X_j$ holds true. Therefore $M_i$ is also a feasible selection in $G_j$. Thus, we get 
\begin{align*}
\CiEntry \leq \CjEntry.
\end{align*}

Second, assume $v$ is contained in $S$. This implies that $v$ is also contained in the optimal selection~$M_i$ since the intersection of $M_i$ with the bag~$X_i$ must be exactly $S$. We claim that
\begin{align*}
\CiEntry = C_j(S\backslash\{v\},n_1',\dots, n_h') &+ p(v)\\
&  - \sum_{P_l: n_l' < |P_l|, n_l = |P_l|} p(P_l),
\end{align*}
where $n_l'$ is equal to $n_l$ if the node~$v$ is not adjacent to $P_l$ and equal to $n_l+1$ if $v$ and $P_l$ are adjacent.

Let $M_i$ be the optimal selection in $G_i$. If $v$ is removed from $M_i$, we know that $(M_i\backslash\{v\})\cap X_i = (M_i\backslash\{v\})\cap (X_j \cup\{v\})$. This is by definition equal to $S\backslash\{v\}$ and, thus, $M_i\backslash\{v\}$ is a feasible selection for $S\backslash\{v\}$ in $G_j$. The profit of $M_i\backslash\{v\}$ is less than the optimal profit~$C_j(S\backslash\{v\},n_1',\dots, n_h')$ and is given as the profit of $M_i$ in $G_i$ without the reward~$p(v)$ of the introduced node~$v$ and without the penalty of penalty sets~$P_l$ which are adjacent to $v$, i.e. without $\sum_{P_l: n_l' < |P_l|, n_l = |P_l|} p(P_l)$. Thus, we get 
\begin{align*}
\CiEntry \leq C_j(S\backslash\{v\},n_1',\dots, n_h') &+ p(v)\\
&  - \sum_{P_l: n_l' < |P_l|, n_l = |P_l|} p(P_l).
\end{align*}

Conversely, given a selection~$M_j$ with $M_j\cap X_j = S\backslash{v}$, one can extend this selection by adding $v$ to a selection~$M_i\coloneqq M_j\cup\{x\}$ in $X_i$ with $M_i\cap X_i = S$. Now the profit of $M_i$ is given as the sum of the profit of $M_j$ and the profit of the reward node~$v$ minus the penalty of the penalty sets~$P_l$ whose whole neighborhood lies in $X_i$ but not in $X_j$. Since the profit of $M_i$ is less than the profit~$\CiEntry$ of an optimal selection with corresponding degrees~$n_l$, we obtain

\begin{align*}
\CiEntry \geq C_j(S\backslash\{v\},n_1',\dots, n_h') &+ p(v)\\
&  - \sum_{P_l: n_l' < |P_l|, n_l = |P_l|} p(P_l).
\end{align*}
This settles the claim.

Again, the number of subsets of a bag is given by $2^{k+1}$ and the number of possible degree combinations is given by $n^{k+1}$. Thus, the running time of computing all values in an introduce node is 
\begin{align*}
\O(2^{k+1}\cdot n^{k+1}).
\end{align*}

\subsubsection{Join Nodes}
Let $X_i$ be a join node, i.e. $X_i = X_{j_1} = X_{j_2}$ for the two descendants~$ X_{j_1} = X_{j_2}$ of $X_i$ in the tree decomposition (see Figure~\ref{fig: schematic join node}).

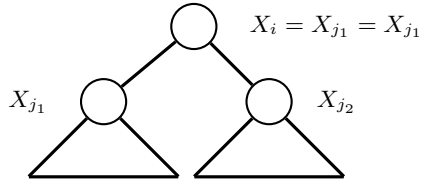
\begin{figure}[hbtp!]				
	\centering		
	\begin{tikzpicture}[every node/.style={fill=white,rectangle}, every edge/.style={draw=black,very thick}]
	\begin{scope}
	\draw
	(1.2,0) node (1) [bagvertex] {}
	(3.1,0) node (1a) [novertex] {\small{$X_i = X_{j_1} = X_{j_1}$}}
	(0,-1) node (2) [bagvertex] {}
	(-1,-1) node (2a) [novertex] {\small{$X_{j_1}$}}
	(-1,-2) node (3) [novertex] {}
	(1,-2) node (4) [novertex] {}		
	
	(2.2,-1) node (5) [bagvertex] {}
	(3.1,-1) node (5a) [novertex] {\small{$X_{j_2}$}}
	(1.2,-2) node (6) [novertex] {}
	(3.2,-2) node (7) [novertex] {};
	
	\path[-] (1) edge (2);
	\path[-] (2) edge (3);
	\path[-] (2) edge (4);
	\path[-] (3) edge (4);
	
	\path[-] (1) edge (5);
	\path[-] (5) edge (6);
	\path[-] (5) edge (7);
	\path[-] (6) edge (7);
	\end{scope}
	\end{tikzpicture}
	\caption{Schematic representation of a join node~$X_i$.}\label{fig: schematic join node}
\end{figure}

\begin{lemma}
	If there exists a penalty set node~$P_l$ in $G_{j_1}$ that has a neighbor~$v\in V_{i}$ in $G_i$ with $v\in V_{j_2}$ and $v\notin V_{j_1}$, then $P_l$ must be contained in the current bag~$X_i$. 
\end{lemma}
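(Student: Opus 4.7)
The plan is to exploit the two defining properties of a tree decomposition: (i) for every edge of the underlying graph, some bag contains both of its endpoints; and (ii) for every vertex $u$, the collection of bags containing $u$ induces a connected subtree of $T$ (equivalently, the ``path condition''). The statement is essentially a standard consequence of these two properties applied to the pair $(v,P_l)$, together with the fact that at a join node $X_i=X_{j_1}=X_{j_2}$ and that the $j_1$-subtree is attached to the rest of $T$ only through $X_{j_1}$.

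First, since $v$ and $P_l$ are adjacent in $RCG(I)$, property (i) guarantees the existence of a bag $X_t$ that contains both of them. I would then rule out the possibility that $X_t$ lies in the subtree rooted at $j_1$: if it did, then $v\in X_t\subseteq V_{j_1}$, contradicting the hypothesis $v\notin V_{j_1}$. Hence $X_t$ is outside the $j_1$-subtree.

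Next, because $P_l$ is a vertex of $G_{j_1}$, we have $P_l\in V_{j_1}$, so there exists at least one bag $X_s$ inside the $j_1$-subtree that contains $P_l$. Now both $X_s$ and $X_t$ contain $P_l$, but they lie on opposite sides of the edge $\{i,j_1\}$ of $T$. The unique $X_s$-to-$X_t$ path in $T$ must therefore pass through $X_{j_1}$, since $X_{j_1}$ is the only gateway connecting the $j_1$-subtree to the rest of $T$. Applying property (ii) to $P_l$, every bag on this path contains $P_l$; in particular $P_l\in X_{j_1}$. Since $X_i=X_{j_1}$ at a join node, we conclude $P_l\in X_i$, as required.

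The only step that requires any care is the second one, namely correctly identifying $X_{j_1}$ as the mandatory ``cut vertex'' of $T$ between the $j_1$-subtree and the rest of the tree, and then invoking connectedness of the bags containing $P_l$. Everything else is a short case distinction. No new machinery beyond the tree-decomposition axioms is needed, so I do not anticipate a serious obstacle.
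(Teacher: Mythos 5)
Your proposal is correct and follows essentially the same route as the paper: both arguments take a bag containing $P_l$ inside the $j_1$-subtree, a bag containing both $P_l$ and $v$ that must lie outside that subtree, and then apply the path (connectedness) condition to force $P_l$ into $X_{j_1}=X_i$. The only cosmetic difference is that you place the second bag merely outside the $j_1$-subtree via $v\notin V_{j_1}$, while the paper places it inside the $j_2$-subtree via $v\in V_{j_2}$; the conclusion is identical.
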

\begin{proof}Suppose for a contradiction, $P_l$ is not contained in the current bag~$X_i$. We know that there exists a bag $X_1$ that is a child of $X_{j_1}$ since $P_l$ is contained in $G_{j_1}$. Further, since $P_l$ is adjacent to the reward set node~$v$, there exists a bag~$X_2$ in which both are contained. Since $v$ is a node in $V_{j_2}$, the bag~$X_2$ is a child of $X_{j_2}$. Since the tree decomposition fulfills the path condition, the node~$P_l$ must be contained in every bag on the unique path from $X_1$ to $X_2$, in particular in bag~$X_i$.  
\end{proof}

\begin{lemma}\label{lemma: value of selection as sum of both bags}
	Let $M_{j_r}$ be a selection in $G_{j_r}$ for $r\in\{1,2\}$ for a given subset~$S\subseteq X_{j_1}=X_{j_2}$. Then, by taking the sum of the profit of both selections, the reward which is counted twice is exactly the reward that is obtained in~$S$.
\end{lemma}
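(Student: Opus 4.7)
The plan is to exploit the fact that in a tree decomposition, the vertex sets of two subtrees rooted at siblings $j_1, j_2$ of a join node can only intersect in the common bag $X_i = X_{j_1} = X_{j_2}$. Once this is established, the lemma reduces to a simple inclusion-exclusion on the selections.

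First, I would show that $V_{j_1} \cap V_{j_2} = X_i$. The containment $X_i \subseteq V_{j_1} \cap V_{j_2}$ is immediate because $X_i = X_{j_1}$ and $X_i = X_{j_2}$, and each $V_{j_r}$ includes its own bag. For the reverse containment, take any vertex $u \in V_{j_1} \cap V_{j_2}$. Then $u$ appears in some bag $X_a$ in the subtree rooted at $j_1$ and in some bag $X_b$ in the subtree rooted at $j_2$. The unique path from $X_a$ to $X_b$ in the tree decomposition passes through $X_i$, so by the path-connectivity (coherence) property of tree decompositions, $u$ must be contained in every bag on this path, including $X_i$. Hence $u \in X_i$.

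Next, since $M_{j_r} \subseteq V_{j_r}$, we have
\begin{align*}
M_{j_1} \cap M_{j_2} \subseteq V_{j_1} \cap V_{j_2} = X_i.
\end{align*}
Combined with the feasibility constraints $M_{j_1} \cap X_{j_1} = S$ and $M_{j_2} \cap X_{j_2} = S$ (and the fact that $X_i = X_{j_1} = X_{j_2}$), this gives $M_{j_1} \cap M_{j_2} = S$. Indeed, any node in $M_{j_1} \cap M_{j_2}$ lies in $X_i$ and therefore lies in $M_{j_r} \cap X_i = S$, and conversely every element of $S$ is contained in both selections by construction.

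Finally, since the profit measure $p$ is additive over node sets, the standard inclusion-exclusion identity yields
\begin{align*}
p(M_{j_1}) + p(M_{j_2}) = p(M_{j_1} \cup M_{j_2}) + p(M_{j_1} \cap M_{j_2}) = p(M_{j_1} \cup M_{j_2}) + p(S),
\end{align*}
so the reward counted twice in the sum is exactly $p(S)$, as claimed. The only nontrivial step is the first one; after that the argument is routine set theory. I anticipate the tree-decomposition coherence argument to be the main subtlety, especially in clearly justifying why the path from $X_a$ to $X_b$ must pass through $X_i$ (it does because removing $i$ disconnects the subtrees rooted at $j_1$ and $j_2$).
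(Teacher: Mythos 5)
Your proposal is correct and follows essentially the same route as the paper: both arguments rest on the coherence (path) property of the tree decomposition to show that any node lying in both selections must appear in the common bag $X_i$ and hence in $S$. You package this slightly more formally (via $V_{j_1}\cap V_{j_2}=X_i$ and an explicit inclusion--exclusion identity), but the underlying idea is identical.
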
  
\begin{proof}
	Suppose $v$ is a reward set node that is contained in both the selections~$M_{j_1}$ and $M_{j_2}$. Since we have a nice tree decomposition, $v$ must also be contained in every bag on the unique path from $X_{v_1}$ to $X_{v_2}$, in particular also in $X_{j_1}$, $X_i$ and $X_{j_2}$. Thus, since $v$ is contained in $M_{j_r}$ and $M_{j_r}\cap X_{j_r} = S$ holds true for $r\in\{1,2\}$, also $v\in S$ holds true. This shows that the profit which was counted twice is exactly the profit gained by the reward set nodes~$S$.
\end{proof}

\begin{lemma}
	It holds:
	\begin{align*}
	C_i(S, n_1, \dots, n_h) = \max\{C_{j_1} & (S, n_1',\dots, n_h') + C_{j_2}(S, n_1'',\dots, n_h'') \\
	& - p(S) - \sum_{P_l: n_l', n_l'' < |P_l|, n_l = |P_l|} p(P_l)\\
	& \text{ for all } n_l', n_l'' \text{ with } n_l' + n_l'' = n_l \\
	& \text{ for all penalty set nodes } P_l \}
	\end{align*}
\end{lemma}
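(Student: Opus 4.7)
The plan is to establish the identity by proving both inequalities through the natural decomposition $M \mapsto (M \cap V_{j_1}, M \cap V_{j_2})$ and its inverse $(M_{j_1}, M_{j_2}) \mapsto M_{j_1} \cup M_{j_2}$. The key tree-decomposition fact $V_{j_1} \cap V_{j_2} = X_i$, together with the two preceding lemmas, will make the reward and penalty accounting close up exactly.

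For the ``$\leq$'' direction, I would take an optimal selection $M$ realizing $\CiEntry$ and set $M_{j_r} := M \cap V_{j_r}$ for $r\in\{1,2\}$. Then $M_{j_r} \cap X_{j_r} = M \cap X_i = S$ and $M_{j_1} \cap M_{j_2} = S$. Letting $n_l^{(r)}$ record how the total number $n_l$ of selected neighbors of $P_l$ is attributed between the two sides, each $M_{j_r}$ is feasible for $C_{j_r}(S, n_1^{(r)}, \ldots, n_h^{(r)})$. By Lemma~\ref{lemma: value of selection as sum of both bags}, the rewards of elements in $S$ are double-counted when we sum the two subtree profits, which explains the $-p(S)$ correction. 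For penalties, the preceding lemma ensures that any penalty set with a neighbor on the opposite side of the join must lie in $X_i$, so every penalty set outside $X_i$ is entirely contained in one subgraph and is already charged correctly by the corresponding $C_{j_r}$. For a penalty set $P_l\in X_i$ with $n_l=|P_l|$, the penalty $p(P_l)$ is subtracted inside $C_{j_r}$ precisely when $n_l^{(r)}=|P_l|$; consequently, the only remaining case, namely $n_l', n_l'' < |P_l|$ with $n_l = |P_l|$, requires the explicit subtraction of $p(P_l)$, yielding exactly the stated correction term.

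For the ``$\geq$'' direction, I would fix any admissible tuple $(n_l', n_l'')$ with $n_l'+n_l'' = n_l$ attaining the maximum on the right-hand side and take optimal selections $M_{j_r}$ realizing $C_{j_r}(S, n_1^{(r)}, \ldots, n_h^{(r)})$. Since both selections agree on $X_i$, the union $M := M_{j_1} \cup M_{j_2}$ is a feasible selection in $G_i$ with $M\cap X_i = S$ and the prescribed total degree counts; its profit equals the right-hand side by the same reward/penalty bookkeeping, so $\CiEntry$ is at least this value. Combining with the reverse inequality yields equality.

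The main obstacle will be the penalty bookkeeping: one must verify, case by case, that for each penalty set node $P_l$ in $X_i$ its contribution is charged exactly once in the combined expression, with the subtraction over $\{P_l : n_l', n_l'' < |P_l|,\ n_l = |P_l|\}$ being both necessary and sufficient to correct all remaining cases. The cross-boundary lemma is what confines this subtlety to penalty sets in $X_i$ and prevents a far more delicate analysis for penalty sets straddling the join but absent from the current bag.
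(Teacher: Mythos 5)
Your proposal is correct and follows essentially the same route as the paper's proof: split an optimal selection across the two subtrees (and conversely take the union of two selections agreeing on $X_i$), invoke the preceding two lemmas to localize cross-boundary penalty sets to $X_i$ and to identify the double-counted reward as $p(S)$, and verify that the explicit subtraction over $\{P_l : n_l', n_l'' < |P_l|,\ n_l = |P_l|\}$ covers exactly the penalties charged in neither child. Your penalty bookkeeping is in fact spelled out slightly more carefully than in the paper.
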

\begin{proof}
	Let $M_i$ be the optimal selection in $G_i$. Then, for $r=1,2$, $M_{j_r}\coloneqq M\cap X_{j_r}$ is a selection in $G_{j_r}$ since $M\cap X_{j_r} = S$ holds. Further, we know that $|M_{j_r} \cap P_l| = n_l^{(r)}$ holds for all penalty sets~$P_l$ and $r=1,2$. By Lemma~\ref{lemma: value of selection as sum of both bags}, the profit counted twice is given by $p(S)$. Also, we have to take into account the penalty of some penalty set nodes whose complete neighborhood is contained in $G_i$, but neither in $G_{j_1}$ nor in $G_{j_2}$. More formally, this penalty is given by the sum~$\sum_{P_l: n_l', n_l'' < |P_l|, n_l = |P_l|} p(P_l)$. Since the profit of $M_{j_r}$ is less than the optimal profit $C_{j_r}(S, n_1^{(r)}, \dots, n_h^{(r)})$, we obtain
	\begin{align*}
	C_i(S, n_1, \dots, n_h) \leq \max\{C_{j_1} & (S, n_1',\dots, n_h') + C_{j_2}(S, n_1'',\dots, n_h'') \\
	&  - p(S) - \sum_{P_l: n_l', n_l'' < |P_l|, n_l = |P_l|} p(P_l)\\
	& \text{ for all } n_l', n_l'' \text{ with } n_l' + n_l'' = n_l \\
	& \text{ for all penalty set nodes } P_l \}.
	\end{align*}
	
	Conversely, suppose $M_r$ is a selection in $G_{j_r}$ with $M\cap X_{j_r} = S$ for $r=1,2$. Let the corresponding profits be given by $C_{j_r}(S, n_1^{(r)},\dots, n_h^{(r)})$. Then, $M \coloneqq M_1 \cup M_2$ is a selection in $G_i$ with $M\cap X_i=S$ with profit $\CiEntry$ where $n_l' + n_l'' = n_l$ holds. Again, the profit counted twice is according to the lemma above given by $p(S)$ and the new penalty is given by the sum $\sum_{P_l: n_l', n_l'' < |P_l|, n_l = |P_l|} p(P_l)$. The profit of $M$ is less than the optimal profit $\CiEntry$ and this settles the proof. 
	%\todo{Aufschreiben!}
	%Let $M$ be a selection in $G_i$ with $M\cap X_i = S$ and profit~$\CiEntry$. Then, $M\cap X_{j_r}$ is a selection in $G_{j_r}$ with $M\cap X_{j_r} = S$ for $r=1,2$. The corresponding profits are given by $C_{j_1}(S, n_1',\dots, n_h')$ and $C_{j_2}(S, n_1'',\dots, n_h'')$ with $n_l' + n_l'' = n_l$. By the lemma above we know that the profit that is counted twice is exactly the profit obtained in $S$. Further, there might be some penalty set nodes whose complete neighbourhood is contained in $G_i$, but neither in $G_{j_1}$ nor in $G_{j_2}$. More formally, this is given by the sum~$\sum_{P_l: n_l', n_l'' < |P_l|, n_l = |P_l|} p(P_l)$. Thus, we get 
\end{proof}
Thus, by the same lines as above for the forget and introduce node, we can compute the entries of a join node in 
\begin{align*}
\O(2^{k+1}(n^{k+1})^2).
\end{align*}

By taking all the results from the different nodes together, we have proven the following theorem.
\begin{theorem}\label{thm: bounded tree width rpsp}
	Given an instance of the RPSP whose reduced connection graph has bounded tree-width~$k$ the RPSP can be solved in time $\O(n(2^{\O(k^3)}+2^{k+1}(n^{k+1})^2))$. $\hfill\Box$
\end{theorem}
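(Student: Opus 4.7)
My plan is to assemble Theorem~\ref{thm: bounded tree width rpsp} from the ingredients already built in this subsection. The overall scheme is a standard bottom-up tree-decomposition DP: first produce a nice tree decomposition of width $k$ of $RCG(I)$, then compute the tables $C_i(S,n_1,\dots,n_h)$ at each bag by the recurrences proved above for leaves, forget nodes, introduce nodes, and join nodes, and finally read off the optimum at the root bag.

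First I would invoke Bodlaender's algorithm to obtain a nice tree decomposition of width $k$ in time $2^{\O(k^3)}\cdot n$; this accounts for the additive $2^{\O(k^3)}$ inside the parentheses of the stated running time. Since a nice tree decomposition on a graph of order $n$ has $\O(n)$ bags and each bag can be classified in constant time as one of the four types, the remaining analysis reduces to bounding the per-bag cost and multiplying by the number of bags.

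By the recurrences already established, the table at a leaf is filled in $\O(1)$ time, at a forget or an introduce node in $\O(2^{k+1}\cdot n^{k+1})$ time, and at a join node in $\O(2^{k+1}(n^{k+1})^2)$ time, since the join recurrence must enumerate all $2^{k+1}$ admissible subsets $S\subseteq X_i$ and, for each of the at most $k+1$ penalty nodes in the bag, all splits $(n_l',n_l'')$ with $n_l'+n_l''=n_l$. The dominant per-bag cost is therefore the join case, and summing over the $\O(n)$ bags gives $\O\bigl(n\cdot 2^{k+1}(n^{k+1})^2\bigr)$; adding the tree-decomposition cost yields the total $\O\bigl(n\,(2^{\O(k^3)}+2^{k+1}(n^{k+1})^2)\bigr)$ claimed in the theorem. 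Correctness is the conjunction of the per-node lemmas, and the optimum of the RPSP is read off as $\max_{S,n_1,\dots,n_h} C_r(S,n_1,\dots,n_h)$ at the root bag $X_r$.

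The only subtle point, which the preceding lemmas already dispose of, is ensuring that no reward is double-counted and no penalty is overlooked when child tables are combined at join nodes: this is exactly the content of Lemma~\ref{lemma: value of selection as sum of both bags} together with the careful handling, via the condition $n_l',n_l''<|P_l|$ and $n_l=|P_l|$, of penalty sets whose neighborhoods are only completed upon joining. With those facts in hand the theorem follows by the bookkeeping outlined above.
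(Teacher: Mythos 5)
Your proposal matches the paper's argument, which likewise obtains the theorem by combining the per-node running-time bounds (leaves in $\O(1)$, forget and introduce nodes in $\O(2^{k+1}n^{k+1})$, join nodes in $\O(2^{k+1}(n^{k+1})^2)$) over the $\O(n)$ bags of a nice tree decomposition, with the $2^{\O(k^3)}$ term accounting for computing that decomposition. Your identification of the join nodes as the dominant cost and of the root table as yielding the optimum is exactly the intended bookkeeping.
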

%
%\subsection{Interpretation as a Hypergraph}
%Describe how to see this problem on a hypergraph. 

\subsection{Reward Singletons, Penalty Sets of Size 2}
In this subsection we discuss instances where all reward sets are singletons and all penalty sets are of size~2. As described above, this has a graph representation where the reward sets are represented by nodes and penalty sets by edges.

As seen in Theorem~\ref{thm: min-RPSP complexity}, the problem is $\NP$-complete even for uniform rewards~$a$ and uniform penalties~$b$ with $\frac{b}{a} > \frac{1}{\Delta}$, where $\Delta$ denotes the maximum degree of the given instance graph. 

For $\frac{b}{a} \geq 1$ the problem becomes polynomial solvable on chordal graphs since there always exists an optimal solution which is a maximum independent set which can be found in polynomial time. 

Unfortunately, the problem remains $\NP$-complete on instances that are represented by chordal graphs if we allow arbitrary rewards and penalties. 
\begin{theorem}
	The RPSP with reward singletons and penalty sets of size~2 is $\NP$-complete for arbitrary rewards and penalties on chordal graphs.
\end{theorem}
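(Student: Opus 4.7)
The plan is to reduce from MAX-CUT with non-negative edge weights, which is NP-hard even on complete graphs (any weighted MAX-CUT instance can be zero-padded to $K_{|V|}$ without changing the optimum). Given a MAX-CUT instance $(G=(V,E), c:E\to \mathbb{Z}_{\geq 0})$, I will build the RPSP instance on $K_{|V|}$---which is trivially chordal---by extending $c$ via $c_{uv}:=0$ for $uv\notin E$, declaring each vertex $v$ a singleton reward set with reward $r_v:=\sum_{u\neq v}c_{uv}$ (the weighted degree of $v$ in $G$), and each edge $uv$ of $K_{|V|}$ a penalty set of size $2$ with penalty $p_{uv}:=2c_{uv}$. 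Membership in $\NP$ is immediate, since the RPSP objective can be evaluated in polynomial time on any given selection $S\subseteq V$.

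The key step of the correctness argument will be the identity
\begin{align*}
\sum_{v\in S} r_v \;-\; \sum_{\substack{u,v\in S \\ u\neq v}} p_{uv} \;=\; \sum_{\substack{uv\in E \\ |\{u,v\}\cap S|=1}} c_{uv}
\qquad\text{for all } S\subseteq V,
\end{align*}
i.e.\ the RPSP profit of $S$ on $K_{|V|}$ coincides with the weight of the cut $(S,V\setminus S)$ in $G$. I would prove it by expanding $\sum_{v\in S} r_v = \sum_{uv} c_{uv}(\chi_S(u)+\chi_S(v))$: an edge with both endpoints in $S$ contributes $2c_{uv}$, an edge with exactly one endpoint in $S$ contributes $c_{uv}$, and the penalty sum then subtracts exactly the intra-$S$ portion. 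Consequently, $G$ admits a cut of weight at least $k$ if and only if the chordal RPSP instance admits a selection of profit at least $k$, which together with membership in $\NP$ yields the claim.

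The main technical obstacle is only a convention issue about whether penalties must be strictly positive: if so, I would replace each $p_{uv}=0$ on a non-$E$ pair by some $\varepsilon<1/\binom{|V|}{2}$, so that the total perturbation stays below $1$ and integer-valued cut weights are still recovered exactly; alternatively, one can work on a chordal completion of $G$ rather than on $K_{|V|}$ and place $\varepsilon$-penalties on the fill-in edges. In either variant the constructed graph is chordal and the reduction remains polynomial, so the choice is cosmetic.
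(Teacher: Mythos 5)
Your proof is correct, but it takes a genuinely different route from the paper. The paper also reduces to a complete graph (hence chordal), but from \emph{Maximum Independent Set}: every original edge of the MIS instance gets a prohibitively large penalty $|V|+1$, every fill-in edge gets penalty $0$, and every node gets reward $1$, so that any selection of positive profit is forced to be an independent set and its profit equals its cardinality. You instead reduce from weighted MAX-CUT, setting $r_v$ to the weighted degree and $p_{uv}=2c_{uv}$, and your identity
\begin{align*}
\sum_{v\in S} r_v \;-\; \sum_{\substack{uv:\, u,v\in S}} p_{uv} \;=\; \sum_{\substack{uv\in E:\, |\{u,v\}\cap S|=1}} c_{uv}
\end{align*}
is easily verified: an edge contributes $c_{uv}\cdot|\{u,v\}\cap S|$ to the reward sum and $2c_{uv}\cdot[\,u,v\in S\,]$ to the penalty sum, leaving exactly the cut edges. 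What your approach buys is an \emph{exact} profit-preserving correspondence valid for every selection $S$, so no ``cleaning-up'' step (removing vertices from over-penalized selections) and no large-penalty threshold argument are needed; what the paper's approach buys is that it keeps the rewards uniform (all equal to $1$), so it doubles as evidence that non-uniform \emph{penalties} alone already cause hardness on chordal graphs, whereas your construction uses non-uniform rewards as well. Your worry about zero penalties is moot for this paper: its own proof assigns penalty $0$ to the fill-in edges, so the convention clearly permits it, and your $\varepsilon$-perturbation fallback is a correct (if unnecessary) safeguard. Both reductions are polynomial and both establish $\NP$-completeness as claimed.
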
 
\begin{proof}
	We show this by a reduction from MIS. Let a graph~$G$ be given. In order to construct a chordal instance graph~$G'$, we first add a penalty of $|V|+1$ to all existing edges. Now add all missing edges to the graph with a penalty of~$0$. Each node is associated with a reward of~$1$. Clearly, by this construction, we get a complete graph which is chordal. We now need to show that there exists a maximum independent set of size~$k$ in~$G$ if and only if there exists a selection~$S$ of nodes with profit at least~$k$ in~$G'$. 
	
	Suppose we are given a maximum independent set of size~$k$ in $G$. We need to show that this selection~$S$ of nodes has a profit of at least~$k$. Since we are given an independent set, no two nodes that are incident to an existing edge are in the selection~$S$. Thus, the profit of~$S$ is given by the number of nodes, i.e. the size of the independent set. Conversely, suppose we are given a selection~$S$ with profit at least~$k$ in $G$. Since $k > 0$, there cannot be two nodes contained in $S$ that are incident to an existing edge. Thus, the selection~$S$ induces an independent set of size at least $k$ in $G$.  
\end{proof} 

\section{Application as a Graph Theoretical Problem}\label{sec: rpsp: graph}
The \emph{subgraph selection problem (SGSP)} is defined as follows. Let a graph~$G$ be given. Furthermore, let $\mathcal{R}$ and $\mathcal{P}$ be two sets of subgraphs of $G$ with associated weights~$w:\mathcal{R}\cup \mathcal{P}:\mapsto \R_+$. The task consists in finding a selection~$S$ of nodes such that
\begin{align*}
\sum_{R\in\mathcal{R}: R\cap G|_S \neq \emptyset} w(R) - \sum_{P\in\mathcal{P} : P\subseteq G|_S} w(P)
\end{align*}
is maximized. The SGSP can be seen as a generalization of the \emph{graph coverage problem}. If we set $\mathcal{R}$ as the set of all single nodes with $w|_\mathcal{R} \equiv 1$ and $\mathcal{P}$ as the set of all paths of length $1$ with $w|_P \equiv |V| + 1$, the SGSP can be seen as the MIS problem. Thus, a reduction from MIS can be easily used to show that the general SGSP is $\NP$-hard.

\subsection{The SGSP on Trees}
In this subsection we consider the SGSP on trees. We use the ideas presented in \cite{krumke2002budgeted, hunt2002parallel, stearns1996algebraic} in order to show that the SGSP can be solved in polynomial time on trees if the reward subgraphs are given as nodes and the penalty subgraphs are given as connected subgraphs such that each node is only contained in at most a constant number of penalty subgraphs. First, we show that the SGSP is, in general, hard to solve, even on star graphs.

\begin{theorem}[Complexity of SGSP on trees]\label{thm: compl sgsp tree}
	Given a star graph~$G$, reward subgraphs are single nodes and penalty subgraphs are paths of length at most three. Then the SGSP on $G$ is $\NP$-hard.
\end{theorem}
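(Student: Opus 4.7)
The plan is to reduce from the Maximum Independent Set Problem on general graphs, which is $\NP$-hard. Given a graph $H=(V_H,E_H)$ with $V_H=\{v_1,\dots,v_n\}$ and a threshold $k\in\N$, I would construct a star $G$ on the vertex set $\{c,v_1,\dots,v_n\}$ with edges $\{c,v_i\}$ for $i=1,\dots,n$, together with the following weighted reward and penalty subgraphs.

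First, each singleton $\{v_i\}$ is placed in $\mathcal{R}$ with weight $1$, and the singleton $\{c\}$ is placed in $\mathcal{R}$ with a dominating weight of $n+1$. Second, for every edge $\{v_i,v_j\}\in E_H$ I would add to $\mathcal{P}$ the unique path $v_i$--$c$--$v_j$ (which has two edges, hence length at most three under either convention) with weight $n+2$. This construction is clearly polynomial and produces a star whose penalty subgraphs are connected subpaths of length at most three, as required.

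The correctness argument splits on whether $c\in S$. The first key step is to show that any optimal selection contains $c$: if $c\notin S$, then no path $v_i$--$c$--$v_j$ is contained in $G|_S$, so every penalty is avoided and the best one can do is pick all leaves for profit exactly $n$; whereas picking $S=\{c\}$ alone already yields profit $n+1>n$. The second key step is that, conditioned on $c\in S$, every penalty subgraph $v_i$--$c$--$v_j$ is contained in $G|_S$ iff both $v_i$ and $v_j$ are selected, and because each such penalty weight exceeds the total reward obtainable from the leaves, an optimal $S$ never selects both endpoints of any edge of $H$. Thus $S\setminus\{c\}$ is an independent set in $H$, and conversely every independent set $I\subseteq V_H$ extends to a feasible selection $S=\{c\}\cup I$ of profit $n+1+|I|$. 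Hence the maximum SGSP profit on $G$ equals $n+1+\alpha(H)$, so deciding whether this profit is at least $n+1+k$ is equivalent to deciding whether $\alpha(H)\ge k$.

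The construction and the arithmetic are straightforward; the only point that needs care is calibrating the weight on $c$ and on the path penalties so that the two case-analyses above go through simultaneously. With the choices $n+1$ and $n+2$ above, both dominations hold, completing the reduction and the proof.
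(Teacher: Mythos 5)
Your reduction is correct and is essentially the same as the paper's: a star with an added center $c$ carrying a dominating reward, unit rewards on the leaves, and for each edge of the MIS instance a penalty on the path through $c$ large enough that an optimal selection never incurs it. The only difference is the specific weight calibration ($n+1$ and $n+2$ versus the paper's $M>|V(G')|$ and $|V|+1$), which is immaterial.
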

\begin{proof}
	We prove this by a reduction from MIS. Given an instance of the MIS on the graph~$G'$, we construct a graph~$G$ in the following way. For the node set~$V(G)$ we take the node set~$V(G')$ together with an artificial node~$c$. Further, we connect all nodes in $V(G)$ to $c$. Thus, the resulting graph is a star. We take the reward subgraphs as singletons, i.e. each node~$v$ in $V(G)$ defines a reward subgraph with reward equal to $1$ if $v\in V(G')$ or reward~$M>|V(G')|$ if $v=c$. Further, for each edge~$(u,v)\in E(G')$, we add a penalty subgraph as the unique path~$(u,c,v)$ with penalty~$|V|+1$. 
	
	There exists a solution to the MIS on $G'$ of size~$k$ if and only if there exists a solution to the SGSP with value~$M+k$. Given an independent set~$S$ of size~$k$, we choose in $G$ all the nodes from $S$ together with the node~$c$. Since $S$ is an independent set, no penalty subgraph is chosen completely. Thus, we obtain a solution of the SGSP with value~$M+k$. Conversely, given a solution of the SGSP with value~$M+k$, we know that the node~$c$ has to be chosen. Since $k< |V|+1$, no penalty subgraph can be chosen completely. Thus, no pair of nodes of the induced node set in $G'$ is adjacent and we obtain an independent set of size~$k$. 
	%	We prove this by a reduction from the Hitting Set Problem (HSP). An instance of the HSP is given by a ground set~$S$ with $m$ elements, a collection~$C=\{S_1,\dots,S_n\}$ of subsets of $S$ and an integer~$k\in\N$. The HSP now asks if there exists a subset~$S'\subseteq S$ with $|S'|\leq k$ such that $S_i\cap S' \neq \emptyset$ for all $i=1,\dots,n$. We construct an instance of the SGSP as follows. For each element in $S$ we introduce a node and add an extra \emph{center} node which is adjacent to all other nodes. Now, we say that the reward subgraphs are given by the subgraphs induced from the sets~$S_1,\dots,S_n$ together with the center node. The penalty subgraphs are given the induced subgraphs from the subsets of $S$ with exactly $k+1$ elements together with the center node. Furthermore, we assume the uniform weight for reward subgraphs is $1$ and the uniform weight for penalty subgraphs is $|V|+1$. 
	%	
	%	Given a solution~$S'$ of the HSP of size~$k$, we obtain directly a solution of the SGSP with objective value~$n$, since from each reward subgraph at least one node is chosen and no penalty subgraph is chosen completely. Conversely, given a solution of the SGSP with objective value~$n$, we know that no penalty subgraph can be chosen completely. Since each reward subgraph gives at most $1$ reward, we know that from each reward subgraph at least one node has to be chosen. This completes the proof. 
\end{proof}

By this reduction, we know that the SGSP is $\NP$-hard even on instances that are given by a star graph. Therefore, we restrict ourselves to the case where we assume that all reward subgraphs are given by nodes in $V(G)$, all penalty subgraphs are connected and each node~$v$ is only contained in a bounded number of penalty sets. Note that without the frequency restriction by Theorem~\ref{thm: compl sgsp tree} the problem is still hard. We follow the notation of \cite{krumke2002budgeted} to introduce the \emph{frequency}. %TODO satz without frequency....
For a node~$v\in V$, we denote its frequency by $\Phi_v \coloneqq \{P\in\mathcal{P} : v\in V(P)\}$, which is the number of penalty subgraphs containing the node~$v$. The \emph{maximum frequency}~$\Phi$ is then defined as
\begin{align}
\Phi \coloneqq \max_{v\in V} \Phi_v.
\end{align}
Note that the SGSP can be formulated by the IP~\eqref{ip: min-RPSP}. The number of variables of the formulation can be reduced by identifying the node variables~$x$ with the reward subgraph variables~$y$. This can be done since one can add dummy reward subgraphs with reward~$0$ for all nodes that are not contained in an already existing reward subgraph. Thus, we get the following integer program.
\begin{align}
\textbf{(SGSP)} \qquad \max \quad &\sum_{R=\{v\}\in\mathcal{R}}w(R)x_v - \sum_{P\in\mathcal{P}} w(P)z_j  \label{ip: sgsp}\\
\text{s.t.}\quad & \sum_{u\in P} (x_u - 1) + 1 \leq z_j \qquad \qquad  \text{for all } P\in\mathcal{P} \label{cons: sgsp}\\
& x_u,z_j \in \{0,1\}  \label{var: decision sgsp}
\end{align}

Given such a formulation, let $X$ be a set of variables and $C$ be a set of constraints on~$X$. Then the \emph{constraint graph} $BP(X,C)$ associated with $(X,C)$ is defined as the bipartite graph with node classes~$X$ and $C$, and edges between $x\in X$ and $c\in C$ if and only if the variable~$x$ appears in constraint~$c$. The \emph{interaction graph} $IG(X,C)$ for $(X,Z)$ is the graph with node set~$X$ and edges between $x_1,x_2\in X$ if and only if they have a common neighbor in the constraint graph~$BP(X,C)$.

%The \emph{interaction graph} $IG(X,C)$ for $(X,Z)$ is the graph with node set~$X$ and edges between $x_1,x_2\in X$ if and only if they have a common neighbour in the constraint graph~$B(X,C)$. Given the IP formulation, we have three kind of variables, the node variables~$x$, the reward subgraph variables~$y$ and the penalty subgraph variables~$z$. Since two penalty subgraph variables are not contained together in any constraint, the edges~$(z_{j_1}, z_{j_2})$ do not exist in the interaction graph. The same holds for edges between reward subgraph nodes and between a reward subgraph node and a penalty subgraph node. Two node variables are contained in the same constraint, if and only if there exists a subgraph, either a reward or a penalty subgraph, which contains both. A node variable node is adjacent to a reward subgraph node if and only if the node is contained in the reward subgraph -- analogously the same holds for penalty subgraphs. Thus, the edges in the interaction graph are of the form
%\begin{align*}
%e = \begin{cases}
%(x_u, x_v) & \text{ if both are contained in a set together. }\\
%(x_u, y_i) & \text{ if } u\in A_i \\
%(x_u, z_j) & \text{ if } u\in B_j.
%\end{cases}
%\end{align*}
%
%This shows that the interaction graph is completely defined by the node variables and how they are contained in reward and penalty subgraphs. By the construction above we see that all nodes of the same reward subgraph form a clique in the interaction graph. Thus, the size of the sets are a lower bound on the tree width. 

%We want to make use of the following two results.
Recall the following two results by Stearns et al. and Hunt et al. which provide a polynomial running time for solving integer programs with an interaction graph that has bounded tree-width. 

\begin{theorem}[cf. \cite{stearns1996algebraic}]\label{thm: solving interaction graph}		
	Let~$p$ be a polynomial. Let $Z$ be a set of variables taking values from the domain $\{0, \dots , K\}$, where $K \in \O(p(|Z|))$, and let $C$ be a set of constraints on $Z$. Then for any fixed $k \in \N$ and any non-negative vector~$c=(c_z)_{z\in Z } \in \{0,\dots, p(|Z|)\}^Z$, the integer program of maximizing~$\sum_{z\in Z} c_z \cdot z$ subject to the constraints~$C$, restricted to those instances where the interaction graph for $(Z, C)$ has bounded tree-width at most~$k$, can be solved in time $K^{\O(k)}$.
\end{theorem}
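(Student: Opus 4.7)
The plan is to perform dynamic programming on a tree decomposition of the interaction graph~$IG(Z,C)$, closely analogous to the one used in Section~\ref{sec: prob var} for the reduced connection graph. Since the width is at most~$k$, we first compute a nice tree decomposition $(T, \{X_i\}_{i\in V(T)})$ of width~$k$ with $\O(|Z|)$ bags in time $|Z|^{\O(1)}$ (e.g.\ by Bodlaender's algorithm, treating~$k$ as a constant). The key structural fact is that, by the very definition of the interaction graph, the set of variables occurring together in any single constraint $c \in C$ forms a clique in $IG(Z,C)$; and standard tree decomposition theory guarantees that every clique of a graph of tree-width~$k$ is contained in some bag. Hence each constraint~$c$ can be pre-assigned to a bag $X_{\pi(c)}$ that contains all variables appearing in~$c$.

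For each bag~$X_i$, we maintain a table $D_i$ indexed by partial assignments $\alpha \colon X_i \to \{0,\dots,K\}$. The entry $D_i[\alpha]$ stores the maximum of $\sum_{z \in V_i} c_z \cdot z$ over all extensions of~$\alpha$ to the variables~$V_i$ in the subtree rooted at~$i$, subject to every constraint assigned to that subtree being satisfied. The recurrences are the standard ones for nice decompositions: at leaves we initialize directly; at an introduce node $X_i = X_j \cup \{z\}$ we enumerate the value $\alpha(z)$, copy from~$D_j$, add $c_z \cdot \alpha(z)$, and reject any assignment violating constraints newly assigned at~$i$; at a forget node we take the maximum over the forgotten variable; and at a join node $X_i = X_{j_1} = X_{j_2}$ we combine matching assignments via $D_i[\alpha] = D_{j_1}[\alpha] + D_{j_2}[\alpha] - \sum_{z \in X_i} c_z \cdot \alpha(z)$ to avoid double-counting the contribution on the shared domain.

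For the running time, each bag has at most $(K+1)^{k+1} = K^{\O(k)}$ entries, and each recurrence --- including the join, which enumerates pairs of matching assignments on identical domains --- is handled in $K^{\O(k)}$ time per bag. With $\O(|Z|)$ bags and coefficients polynomial in $|Z|$, this yields $K^{\O(k)}$ arithmetic operations on numbers of polynomial bit-length, as claimed. The main obstacle is verifying that the pre-assignment~$\pi$ genuinely covers every constraint, i.e.\ that each constraint's variable set truly sits inside at least one bag; once that is secured by the clique-in-a-bag property, correctness of the DP is a routine induction over~$T$, and the accounting at the join node is the only place one must be careful not to overcount a variable's coefficient.
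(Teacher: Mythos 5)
Your proposal is a correct, self-contained proof of a statement that the paper itself does not prove at all: Theorem~\ref{thm: solving interaction graph} is imported verbatim from Stearns et al.~\cite{stearns1996algebraic}, so there is no in-paper argument to compare against. What you wrote is the standard argument for this result, and all the essential ingredients are in place: the observation that the variables of any single constraint form a clique in $IG(Z,C)$, the clique-in-a-bag property of tree decompositions (which lets you localize each constraint to a single bag), and the dynamic program over partial assignments $\alpha\colon X_i\to\{0,\dots,K\}$ with the usual introduce/forget/join recurrences, including the correct subtraction of $\sum_{z\in X_i}c_z\cdot\alpha(z)$ at join nodes to avoid double counting. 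Two small points of hygiene: first, a constraint assigned to bag $X_{\pi(c)}$ should simply be checked when that bag is processed, whatever its type --- tying the check to ``constraints newly assigned at an introduce node'' is fine only if you argue (or arrange) that $\pi$ maps every constraint to an introduce node, which is not automatic; checking at $X_{\pi(c)}$ directly avoids the issue. Second, your accounting gives $\O(|Z|)\cdot K^{\O(k)}$ table operations, which matches the stated bound $K^{\O(k)}$ only when the linear factor in the number of bags is absorbed into the exponent; this looseness is inherited from the theorem statement itself (which assumes $K\in\O(p(|Z|))$ but places no lower bound on $K$) rather than being a defect of your argument.
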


\begin{theorem}[cf. \cite{hunt2002parallel}]\label{thm: bounded constraint graph}
	Let $Z$ be a set of variables and $C$ be a set of constraints on $Z$. Suppose that each	constraint contains at most $k$ variables. Let $BP(Z, C)$ be the constraint graph associated with $(Z, C)$ and $IG(Z, C)$ the corresponding interaction graph. Then,
	\begin{align*}
	tw(IG(Z, C)) \in \O(k \cdot tw(BP(Z, C)).
	\end{align*}
\end{theorem}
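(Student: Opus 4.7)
The plan is to construct an explicit tree decomposition of $IG(Z,C)$ from an optimal tree decomposition of $BP(Z,C)$ by exploiting the observation that each constraint $c \in C$ induces a clique in $IG(Z,C)$ of size at most $k$. Let $(T, \{X_t\}_{t\in V(T)})$ be an optimal tree decomposition of $BP(Z,C)$, so $|X_t| \leq tw(BP(Z,C)) + 1$ for every bag.

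For each bag $X_t$ I would enrich it on the variable side by pulling in the variables of every constraint node currently in the bag. Formally, define
\begin{align*}
Y_t \coloneqq (X_t \cap Z) \cup \bigcup_{c \in X_t \cap C} \{\,x \in Z : x \text{ appears in } c\,\}.
\end{align*}
Since $|X_t \cap C| \leq tw(BP(Z,C)) + 1$ and each constraint contributes at most $k$ variables, a direct count gives $|Y_t| \leq (tw(BP(Z,C)) + 1)(k+1)$, so the width of the candidate decomposition is in $\O(k \cdot tw(BP(Z,C)))$. What remains is to check that $(T, \{Y_t\}_{t\in V(T)})$ is indeed a tree decomposition of $IG(Z,C)$.

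The three tree decomposition properties are verified as follows. Vertex coverage is immediate: any $x \in Z$ lies in some bag $X_t$ of the $BP$-decomposition and therefore in $Y_t$. For edge coverage, if $x_1, x_2 \in Z$ are adjacent in $IG(Z,C)$, there exists a constraint $c \in C$ with $x_1, x_2 \in N_{BP}(c)$; by the edge-coverage property applied to $(x_1, c)$ (or to $(x_2, c)$) there exists a bag $X_{t_0}$ containing $c$, and by construction $Y_{t_0}$ contains both $x_1$ and $x_2$. For the running intersection property, observe that
\begin{align*}
\{t : x \in Y_t\} = \{t : x \in X_t\} \cup \bigcup_{c \,:\, x \in N_{BP}(c)} \{t : c \in X_t\},
\end{align*}
which I need to show induces a connected subtree of $T$.

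The main obstacle is exactly this last connectedness claim, but it follows cleanly from the running intersection property of the input decomposition: each of the sets $\{t : x \in X_t\}$ and $\{t : c \in X_t\}$ is already a subtree of $T$, and for every constraint $c$ containing $x$ the edge $(x, c) \in E(BP(Z,C))$ forces the two subtrees to share at least one node, so their union is connected; connectedness of the full union then follows by successively attaching the subtrees associated with the constraints containing $x$ via the common subtree $\{t : x \in X_t\}$. Combining this with the width bound yields $tw(IG(Z,C)) \in \O(k \cdot tw(BP(Z,C)))$ as claimed.
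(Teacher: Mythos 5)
The paper states this theorem only as a cited result from Hunt et al.\ and gives no proof of its own, so there is nothing internal to compare against. Your argument is correct and is the standard construction for this fact: expanding each bag of the $BP(Z,C)$-decomposition by the (at most $k$) variables of every constraint node it contains, with the width bound $(tw+1)(k+1)$, edge coverage via any bag holding the shared constraint, and connectedness of $\{t : x \in Y_t\}$ following because each constraint's subtree meets the subtree of $x$ through a bag witnessing the edge $(x,c)$ in $BP(Z,C)$.
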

In order to apply these two results to the SGSP, we need to show that the constraint graph of an instance has bounded tree-width. This is done in the following lemma.  

\begin{lemma}\label{lem: sgsp interaction graph}
	Let the SGSP on a tree~$G$ be restricted to the case where the reward subgraphs are singletons, the penalty subgraphs are connected and the maximum frequency~$\Phi$ is bounded by a constant~$B$. Then the constraint graph of \eqref{ip: sgsp} with node set introduced by the variables~$x,z$ and constraints~\eqref{cons: sgsp} has a tree-width of at most $\Phi$.
\end{lemma}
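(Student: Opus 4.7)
The plan is to exhibit an explicit tree decomposition of the bipartite constraint graph $BP(X,C)$ whose underlying tree mirrors $G$ and whose maximum bag size is $\Phi+1$. The key structural fact is that every penalty subgraph $P\in\mathcal{P}$, being a connected subgraph of a tree, is itself a subtree of $G$.

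First I would root $G$ at an arbitrary vertex, which gives each $P\in\mathcal{P}$ a unique topmost node $\mathrm{top}(P)$. For every $v\in V(G)$ I form the bag $X_v\coloneqq\{x_v\}\cup\{c_P : v\in V(P)\}$, and I arrange these bags into a tree $T$ that copies the edge structure of $G$ (making $X_u$ adjacent to $X_v$ in $T$ whenever $uv$ is an edge of $G$). To accommodate the degree-one variables $z_P$ of the constraint graph, I attach, for every $P\in\mathcal{P}$, an additional leaf bag $\{c_P, z_P\}$ as a child of $X_{\mathrm{top}(P)}$.

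Next I would verify the three tree-decomposition axioms. Vertex coverage is immediate, since $x_v\in X_v$ and both $c_P,z_P$ lie in the auxiliary bag associated with $P$. Edge coverage is also direct: an edge $(x_v,c_P)$ in $BP(X,C)$ exists iff $v\in V(P)$, and then $X_v$ contains both endpoints; the edge $(z_P,c_P)$ is covered by the corresponding auxiliary bag. The connectivity axiom is the one load-bearing step. For $x_v$ and $z_P$ it is trivial, since each occurs in only a single bag. For $c_P$, the bags containing it are exactly $\{X_v : v\in V(P)\}$ together with the leaf attached to $X_{\mathrm{top}(P)}$; since $V(P)$ induces a connected subtree of $G$, the corresponding $X_v$'s form a connected subtree of $T$, and the appended leaf preserves this connectedness.

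Finally, I would bound bag sizes: $|X_v|=1+\Phi_v\le 1+\Phi$, while each auxiliary bag has size exactly $2$. Hence $\mathrm{tw}(BP(X,C))\le \Phi$, as claimed. I do not foresee a significant obstacle; the only place where a hypothesis of the lemma is genuinely used is the connectivity axiom for the $c_P$-nodes, which is precisely where the connectedness of the penalty subgraphs (combined with the tree structure of $G$) pays off. The frequency bound then enters only in the final size count.
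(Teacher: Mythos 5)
Your proposal is correct and follows essentially the same route as the paper: the same bags $X_v=\{x_v\}\cup\{c_P: v\in V(P)\}$ arranged along a copy of $G$, the same auxiliary leaf bags $\{c_P,z_P\}$, the same use of connectedness of the penalty subgraphs to establish the connectivity axiom for the $c_P$-nodes, and the same bag-size count giving width $\Phi$. The only cosmetic difference is that you attach the auxiliary leaf at $X_{\mathrm{top}(P)}$ while the paper attaches it to an arbitrary bag containing $c_P$; both choices work.
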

\begin{proof}
	We prove this by constructing a tree decomposition~$(\mathcal{X},T)$ of the constraint graph of problem~\eqref{ip: sgsp} - \eqref{var: decision sgsp}. For each node~$v\in V(G)$ we introduce a bag
	$X_v\coloneqq \{x_v, C_1,\dots, C_{\Phi_v}\}$, where the $C_i$ are the constraints such that $v$ is contained in the corresponding penalty subgraph~$P_i$. Further, we introduce bags~$X_p \coloneqq \{c_p, z_p\}$. The decomposition tree~$T$ has the same structure as the graph~$G$ where we add additional leaves~$i$ such that $X_i$ is of the form $\{C_i, z_i\}$ and adjacent to an arbitrary bag that contains~$C_i$. The biggest size of a bag is now $\Phi + 1$ and thus we get a tree-width of $\Phi$. 
	
	By construction, it holds $\cup_{i\in V(T)} X_i = V(G)$. Also, each pair of adjacent nodes are in at least one shared bag. It remains to show that for each node~$v$ in the constraint graph, the set of nodes $\{i: v\in X_i\}$ forms a subtree of $T$. If $v$ is a node corresponding to a node variable~$x_v$, it is only contained in one bag. Similar, also by construction, a penalty subgraph node is only contained in one bag. Now let $v$ be a constraint node. Suppose the set~$\{X_i: v\in X_i\}$ does not form a subtree of $T$. Since $T$ is a tree, the only possibility is that the set induces a not connected subgraph. This would imply that there exists a penalty subgraph~$P$ that is not connected -- a contradiction to the assumption. 	
\end{proof}

As we have seen, the SGSP on trees has, under the restrictions of Lemma~\ref{lem: sgsp interaction graph}, an interaction graph with bounded tree-width. By applying Theorem~\ref{thm: bounded constraint graph} and Theorem~\ref{thm: solving interaction graph}, we know that the SGSP on trees under said conditions can be solved in polynomial time. We summarize this in the following theorem. 

\begin{theorem}[Complexity of SGSP on trees]
	Let $c\in\N$ be an integer. Furthermore, let the SGSP on trees be restricted such that maximum frequency bounded by $c$. Also, let the reward subgraphs be given as singletons, the penalty subgraphs be connected and the weight function be polynomially bounded. Then the SGSP is solvable in polynomial time. $\hfill\Box$
\end{theorem}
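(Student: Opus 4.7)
The plan is to cast the restricted SGSP as an integer program and then compose the three preceding results (Lemma~\ref{lem: sgsp interaction graph}, Theorem~\ref{thm: bounded constraint graph}, and Theorem~\ref{thm: solving interaction graph}) in sequence to obtain a polynomial-time algorithm.

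First, I would invoke the IP formulation \eqref{ip: sgsp}--\eqref{var: decision sgsp}. Because the reward subgraphs are singletons, the reward-subgraph variables collapse to the node variables $x_v$; the penalty variables $z_j$ are binary; and the objective coefficients are polynomially bounded by hypothesis. Every variable thus has domain $\{0,1\}$, so the parameter $K=1$ trivially satisfies $K\in\O(p(|Z|))$ for any nondecreasing polynomial~$p$, as required by Theorem~\ref{thm: solving interaction graph}.

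Next, I would apply Lemma~\ref{lem: sgsp interaction graph} to conclude that the constraint graph $BP(Z,C)$ of this IP has tree-width at most $\Phi\leq c$; its hypotheses (singleton rewards, connected penalty subgraphs, frequency bounded by~$c$) are exactly the standing assumptions of the theorem. Then Theorem~\ref{thm: bounded constraint graph} transfers this bound to the interaction graph, giving $tw(IG(Z,C))\in\O(k\cdot c)$, where $k$ is the maximum number of variables appearing in any single constraint of \eqref{cons: sgsp}. Finally, plugging this tree-width, the domain $K=1$, and the polynomially bounded coefficients into Theorem~\ref{thm: solving interaction graph} would yield an algorithm running in time polynomial in~$n$.

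The step I expect to require the most care is controlling the arity $k$: a penalty constraint \eqref{cons: sgsp} associated with a penalty subgraph $P$ has arity $|V(P)|+1$, which is not \emph{a priori} a constant. I would address this by linearising each penalty constraint into a chain of arity-three constraints built on auxiliary binary ``running sum'' variables threaded along the tree edges of the connected subgraph~$P$. The extended IP has the same optimum, and its constraint graph admits a tree decomposition obtained from the one in Lemma~\ref{lem: sgsp interaction graph} by adjoining each new auxiliary variable to the bag $X_v$ of its host node (at most $\Phi$ such variables per node). This keeps the tree-width of $BP$ in $\O(c)$ while bounding the constraint arity by a constant, so Theorems~\ref{thm: bounded constraint graph} and \ref{thm: solving interaction graph} apply in succession and deliver the claimed polynomial running time.
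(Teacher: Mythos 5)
Your proposal follows the same route as the paper: formulate the restricted SGSP as the integer program \eqref{ip: sgsp}--\eqref{var: decision sgsp}, bound the tree-width of the constraint graph via Lemma~\ref{lem: sgsp interaction graph}, transfer the bound to the interaction graph via Theorem~\ref{thm: bounded constraint graph}, and finish with Theorem~\ref{thm: solving interaction graph}. What you add --- and what the paper's own proof silently skips --- is the observation that Theorem~\ref{thm: bounded constraint graph} requires every constraint to contain at most $k$ variables, whereas a penalty constraint \eqref{cons: sgsp} has arity $|V(P)|+1$, which the hypotheses do not bound: the frequency bound $\Phi\le c$ limits how many penalty subgraphs contain a given node, not how many nodes a penalty subgraph contains. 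This is a genuine gap in the paper's argument, since the variables of one constraint form a clique in the interaction graph, so without reformulation the interaction graph has tree-width at least $\max_{P}|V(P)|$ and Theorem~\ref{thm: solving interaction graph} is not applicable. Your repair --- replacing each penalty constraint by a chain of bounded-arity constraints with auxiliary ``running'' variables threaded along the connected subtree $P$ --- is the right idea. The only place your accounting is too quick is the claim that at most $\Phi$ auxiliary variables need to be adjoined to the bag $X_v$: if $v$ has many children inside $P$, reducing the conjunction over those children to arity three forces a further chain of partial-product variables, which should be distributed over (subdivided) decomposition edges between $X_v$ and its children's bags --- at most $\Phi$ per edge, since each child lies in at most $\Phi$ penalty subgraphs --- so that every bag keeps size $\O(c)$. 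With that adjustment your argument is complete, and it is in fact more careful than the proof given in the paper.
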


We know that the independent set problem on graphs is hard to solve, even if we restrict ourselves to graphs with bounded degree. The result from above is no contradiction since in the reduction graph of the proof of Theorem~\ref{thm: compl sgsp tree} the property of bounded frequency induces a bound on all nodes, except from the "center" node which is still connected to all other nodes.

\section{Conclusion}
In this paper we discussed the RPSP, a combinatorial optimization problem which can be viewed as a combination of the SCP and the HSP. We gave complexity results for the general minimization problem as well as for the the maximization problem. While it turns out that the first one is in general hard to solve, the latter one is solvable in polynomial time. Furthermore, we gave a formulation of the RPSP as an integer program. A short numerical study shows that a rounding approach depending on the linearization of the integer program yields in good solutions, since the average approximation ratio for all the tested instances is greater than 0.95. 

We considered problem variants of the minimization RPSP. For the laminar RPSP we obtained a polynomial time algorithm depending on a tree representation of the problem instance and a flow computation on the corresponding network graph. If the reward sets are given as singletons and the graph depending on the instance has bounded treewidth, we obtained a polynomial time algorithm based on a dynamic programming approach. If the reward sets are given as singletons, penalty sets are of size exactly two and uniform weights are given, we showed that one can compute a solution on chordal instance graphs by using the fact that the maximum independent set problem is solvable in polynomial time on chordal graphs. Unfortunately, the problem remains hard to solve when considering arbitrary weights. 

Furthermore, we gave a generalization of the RPSP as a combinatorial problem from a graph theoretical point of view. In this problem, one tries to find a selection of nodes such that some desired subgraphs are covered while others are avoided. Using a reduction from the MIS, it turns out that the SGSP is in general hard to solve, even on trees and under additional size restrictions of the subgraphs. By using results from \cite{hunt2002parallel, stearns1996algebraic, krumke2002budgeted}, the SGSP can be solved in polynomial time if the maximum frequency of the instance graph is bounded. 

The SGSP raises many research question worth to address. While the question of the complexity is settled, the problem of finding an approximation of the SGSP remains open. Furthermore, since the proof of the complexity is based on a reduction from the MIS, it might be worth to consider instances where the MIS is solvable in polynomial time, such as chordal graphs.   

\bibliographystyle{plain}
\bibliography{references}
\end{document}